\pgfplotsset{compat=newest}
\DeclareMathOperator*{\E}{{\rm E}}
\DeclareMathOperator*{\R}{\mathbb{R}}
\DeclareMathOperator*{\Z}{\mathbb{Z}}
\DeclareMathOperator*{\N}{\mathbb{N}}
\DeclareMathOperator*{\LL}{{\Lambda}}
\DeclareMathOperator*{\Ll}{{\lambda}}
\DeclareMathOperator*{\C}{{\cal C}}
\DeclareMathOperator*{\e}{{\rm e}}
\DeclareMathOperator*{\boundary}{{\partial\!}}
\DeclareMathOperator*{\bh}{{\hat{\beta}}}
\def\taumix{\tau_{\rm mix}}
\def\Tcoup{T_{\rm coup}}
\def\GD{{\cal M}}
\newcommand{\tightoverset}[2]{%
	\mathop{#2}\limits^{\vbox to -.5ex{\kern-1.37ex\hbox{$#1$}\vss}}}
\newcommand{\tightoversetsub}[2]{%
	\mathop{#2}\limits^{\vbox to -.5ex{\kern-1.6ex\hbox{$#1$}\vss}}}
\theoremstyle{definition}
\newtheorem{dfn}{Definition}[section]
\theoremstyle{plain}
\newtheorem{thm}[dfn]{Theorem}
\newtheorem{lem}[dfn]{Lemma}
\newtheorem{fact}[dfn]{Fact}
\newtheorem{Claim}[dfn]{Claim}
\begin{document}

\title{Random-Cluster Dynamics in $\Z^2$}
\author{Antonio Blanca\thanks{Computer Science Division, U.C. Berkeley, Berkeley, CA 94720. Email: {\tt ablanca@cs.berkeley.edu}. Research supported in part by an NSF Graduate Research Fellowship and NSF grant CCF-1420934.}
	\and Alistair Sinclair\thanks{Computer Science Division, U.C. Berkeley, Berkeley, CA 94720. Email: {\tt sinclair@cs.berkeley.edu}. Research supported in part by NSF grant CCF-1420934.}
}

\maketitle

\begin{abstract} 
	\noindent
	The random-cluster model has been widely studied as a unifying framework for random graphs,
	spin systems and electrical networks, but its dynamics have so far largely resisted analysis.
	In this paper we
	analyze the Glauber dynamics of the random-cluster model in the canonical case where the
	underlying graph is an $n \times n$ box in the Cartesian lattice~$\Z^2$.  Our main result is a
	$O(n^2\log n)$ upper bound for the mixing time at all values of the model parameter~$p$
	except the critical point $p=p_c(q)$, and for all values of the second model parameter~$q\ge 1$. We also provide a matching lower bound proving that our result is tight.
	Our analysis takes as its starting point the recent breakthrough by Beffara and
	Duminil-Copin on the location of the random-cluster phase transition in~$\Z^2$.
	It is reminiscent of similar results for spin systems such as the Ising and Potts models,
	but requires the reworking of several standard tools in the context of the random-cluster
	model, which is not a spin system in the usual sense.
\end{abstract}

	\section{Introduction}
	
	Let $G=(V,E)$ be a finite graph.  The {\it random-cluster model\/} on~$G$ with parameters
	$p\in(0,1)$ and $q>0$ assigns to each subgraph $(V,A \subseteq E)$ a probability
	\begin{equation}\label{eq:rcmeasure}
	\mu_{G,p,q}(A) \propto p^{|A|}(1-p)^{|E|-|A|} q^{c(A)},
	\end{equation}
	where $c(A)$ is the number of connected components in~$(V,A)$.  $A$~is a
	{\it configuration\/} of the model.
	
	The random-cluster model was introduced in the late 1960s by Fortuin and Kasteleyn~\cite{FK}
	as a unifying framework for studying random graphs, spin systems in physics and electrical networks; see the book~\cite{Grimmett} for extensive background. When $q=1$ this model corresponds
	to the standard bond percolation model on $G$ with parameter $p$,
	but when $q>1$ (resp., $q<1$) the probability measure favors subgraphs with more (resp., fewer) connected components, and is thus a strict generalization.
	
	For the special case of integer $q \ge 2$ the random-cluster model is, in a precise sense, dual to the classical ferromagnetic {\it $q$-state Potts model}, where configurations are assignments of spin values $\{1,\ldots,q\}$ to the vertices of~$G$; the duality is established via a coupling of the models (see, e.g., \cite{ES}). Consequently, the random-cluster model illuminates much of the physical theory of the Ising/Potts models. However, it should be stressed that the random-cluster model is not a ``spin system" in the usual
	sense: in particular, the probability that an edge~$e$ belongs to~$A$ does not depend only on
	the dispositions of its neighboring edges but on the entire configuration~$A$, since connectivity
	is a global property. 
	
	At the other extreme, when $q\rightarrow 0$, the set of (weak) limits that arise for various choices of $p$ contains fundamental distributions on $G$, including uniform measures over the spanning trees, spanning forests and connected subgraphs of $G$.
	
	\medskip\noindent
	{\bf The random-cluster model on $\Z^2$.}\ \ The random-cluster model is well defined for the infinite 2-dimensional lattice graph~$\Z^2$ as the limit of the sequence of random-cluster measures on $n\times n$ square regions $\LL_n$ of $\Z^2$ as $n$ goes to infinity. Recent breakthrough work of Beffara and Duminil-Copin~\cite{BDC} for the infinite measure has established the following {\it phase transition\/} at the critical value $p=p_c(q) = \sqrt{q}/(\sqrt{q}+1)$ for all $q \ge 1$: for $p < p_c(q)$ all connected components are finite with high probability\footnote{We say that an event occurs with high probability if it occurs with probability approaching $1$ as $n \rightarrow \infty$.}; while for $p > p_c(q)$ there is at least one infinite component with high probability. It was also established in \cite{BDC} that for $p < p_c(q)$ the model exhibits ``decay of connectivities", i.e., the probability that two vertices lie in the same connected component decays to zero exponentially with the distance between them. This property is analogous to the classical ``decay of correlations"
	that has long been known for the Ising model (see, e.g.,~\cite{Mart}). 
	
	In this paper, we explore the
	consequences of the Beffara-Duminil-Copin result for the {\it dynamics\/} of the model in the case most widely studied
	in the literature, when $G$ is an $n\times n$ square region~$\Lambda_n$ of~$\Z^2$ and $q\ge 1$.
	
	\medskip\noindent
	{\bf Glauber dynamics.}\ \ A {\it Glauber dynamics\/} for the random-cluster model is any local Markov
	chain on configurations that is reversible with respect to the measure~(\ref{eq:rcmeasure}),
	and hence converges to it. Specifically we will consider the ``heat-bath" dynamics,
	which at each
	step updates one edge of the current configuration~$A$ as follows:
	\begin{enumerate}[(i)]
		\setlength\itemsep{1em}
		\item pick an edge $e\in E$ uniformly at random (u.a.r.);
		\item replace $A$ by $A\cup \{e\}$ with probability \[\frac{\mu_{G,p,q}(A\cup\{e\})}{\mu_{G,p,q}(A\cup\{e\})+\mu_{G,p,q}(A\setminus\{e\})};\]
		\item else replace $A$ by $A \setminus \{e\}$.
	\end{enumerate}
	These transition probabilities can be easily computed, as explained in
	Section~\ref{section:prelim}.
	
	Glauber dynamics for spin systems have been widely studied in both statistical physics and
	computer science.  On the one hand, they provide a Markov chain Monte Carlo algorithm for
	sampling configurations of the system from the Gibbs distribution; on the other hand, they are a
	generally accepted model for the evolution of the underlying physical system.  The primary
	object of study is the {\it mixing time}, i.e., the number of steps until the dynamics is close to
	its stationary distribution, starting from any initial configuration.
	
	There has been much activity over the past two decades in analyzing Glauber dynamics for
	spin systems such as the Ising and Potts models, and deep connections have emerged
	between the mixing time and the phase structure of the physical model. In contrast, the Glauber
	dynamics for the random-cluster model remains very poorly understood.  The main reason
	for this appears to be the fact mentioned above that connectivity is a global property; this has
	led to the lack (until the recent breakthrough~\cite{BDC}) of a precise understanding of the
	phase transition, as well as the failure of existing Markov chain analysis tools.
	
	Essentially all existing bounds on the mixing time of the random-cluster Glauber dynamics (even in the
	simplest case of the mean-field model, where $G$ is the complete graph~\cite{BSmf}) are indirect,
	and proceed via a non-local dynamics (the so-called ``Swendsen-Wang" dynamics~\cite{SW}
	or its variants).  Comparison technology developed recently by
	Ullrich~\cite{Ullrich2,Ullrich3,Ullrich4}
	allows bounds for the Glauber dynamics of the Ising/Potts models to be translated to the
	Swendsen-Wang dynamics, and then again to the random-cluster dynamics.  This leads,
	for example, to an upper bound of $O(n^6\log^2 n)$ on the mixing time of the random-cluster
	dynamics in~$\Lambda_n\subset \Z^2$, at all values $p\ne p_c(q)$ for all integer $q\ge 2$.
	This approach has several serious limitations:
	\begin{enumerate}
		\setlength\itemsep{1em}
		\item The comparison method invokes linear algebra, and hence suffers an inherent penalty
		of at least $\Omega(n^4)$ in the mixing time bound\footnote{For a pair of functions $f,g:\N \rightarrow \R^+$, we say that $f(n) = O(g(n))$ (resp., $f(n) = \Omega(g(n))$) if there exists constants $c,n_0 > 0$ such that $f(n) \le c \cdot g(n)$ (resp., $f(n) \ge c \cdot g(n))$ for all $n > n_0$. We say that $f(n) = \Theta(g(n))$ when $f(n) = O(g(n))$ and $f(n) = \Omega(g(n))$.}; thus tight bounds can never be obtained in this way.
		\item The comparison method also yields no insight into the actual behavior of the random-cluster
		dynamics, so, e.g., it is unlikely to illuminate the connections with phase transitions.
		\item Since it relies on comparison with the Ising/Potts models, this analysis applies only for
		{\it integer\/} values of~$q$, while the random-cluster model is defined for all positive
		values of~$q$.
	\end{enumerate}
	
	\medskip\noindent
	{\bf Results.}\ \ In this paper we present the first direct analysis of the random-cluster dynamics
	and prove the following tight theorem for the important case of~$\Z^2$:
	\begin{thm}\label{thm:mainintro}
		For any $q\ge 1$, the mixing time of the Glauber dynamics for the random-cluster model on
		$\Lambda_n\subset \Z^2$ is $\Theta(n^2\log n)$ at all values of $p\ne p_c(q)$.
	\end{thm}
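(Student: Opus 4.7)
The plan is to treat the subcritical regime $p<p_c(q)$ as the core case, and then to obtain the supercritical regime $p>p_c(q)$ essentially for free from planar duality of the random-cluster model on $\Z^2$: under the duality $p\mapsto p^*$ with $p^*/(1-p^*)=q(1-p)/p$, the measure $\mu_{\Lambda_n,p,q}$ on the primal lattice corresponds to a random-cluster measure with parameter $p^*$ on the dual, and $p>p_c(q)$ maps to $p^*<p_c(q)$. One wrinkle is that the dual of a free boundary is a wired one (and vice versa), so I would first prove the subcritical upper bound uniformly in boundary conditions and then transport it to the supercritical side by comparing the Glauber chain to its dual chain through the edge-flip bijection.

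For the subcritical upper bound I would follow the Martinelli--Olivieri block-dynamics template, but adapted to the random-cluster setting where ``boundary conditions'' are not spins on $\partial\LL_n$ but partitions of $\partial\LL_n$ encoding connections through the complement. First I would upgrade the Beffara--Duminil-Copin exponential decay of connectivities into a \emph{strong spatial mixing} statement of the form: for any two boundary partitions $\xi,\xi'$ on an $\ell\times\ell$ box that differ only on a subset $S\subseteq\partial B$, the total variation distance between the induced random-cluster measures on any sub-box at distance $r$ from $S$ decays like $\exp(-cr)$. This is the main conceptual obstacle, because altering the boundary partition can change the cluster structure globally; I would handle it by coupling two configurations through a common monotone coupling of the boundary connections, then revealing the disagreement cluster and using exponential decay of two-point connectivities (and the FKG inequality, both valid for $q\geq 1$) to show the disagreement cluster is confined to an $O(\log n)$ neighborhood of $S$ with high probability.

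Given strong spatial mixing, the standard recursive argument shows that the heat-bath block dynamics with $O(\log n)$-sized blocks has spectral gap $\Omega(1)$, uniformly in boundary condition, and hence mixes in $O(\log n)$ steps. To descend from block dynamics to single-edge (Glauber) dynamics, I would invoke a local comparison: within a block of side $\ell=O(\log n)$ there are $O(\ell^2)$ edges, and on such a small region one shows (again using decay of connectivities to control the effect of the rest of the configuration) that single-edge Glauber mixes in $\mathrm{poly}(\ell)$ time, with spectral gap bounded below by a universal constant times $\ell^{-2}$. Combining via the block-dynamics-to-Glauber spectral-gap comparison yields gap $\Omega(n^{-2})$ for single-edge Glauber on $\LL_n$, and the corresponding $O(n^2\log n)$ mixing bound follows from the standard relation between gap, log-Sobolev constant, and mixing on a space of size $2^{|E|}$. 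The chief technical subtlety in this block-to-Glauber step is that the energy contribution of an edge in the random-cluster model depends on whether its endpoints are already connected, so a careful canonical-paths or coupling argument is needed rather than a black-box appeal to spin-system machinery.

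The matching $\Omega(n^2\log n)$ lower bound is the easier direction and is essentially a coupon-collector argument: because each Glauber step updates one of the $\Theta(n^2)$ edges uniformly at random, and because distinguishing between two reasonable initial configurations requires each edge in some $\Theta(n^2)$ region to have been selected at least once (else an entire block of edges is still distributed according to the initial configuration on the level of a local observable such as edge density), standard concentration for the occupation measure of a uniform sampler gives the claimed lower bound. I would make this rigorous by exhibiting a local function whose distribution under the chain starting at the empty configuration differs from the stationary one by $\Omega(1)$ unless at least $\Theta(n^2\log n)$ steps have elapsed.
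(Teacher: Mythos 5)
Your high-level plan matches the paper's architecture (subcritical as the core case, duality for supercritical, a distinguishing-statistic lower bound), but there is a fatal gap in the subcritical argument. You propose to prove strong spatial mixing ``uniformly in boundary conditions,'' i.e.\ for arbitrary boundary partitions, and this is provably false for the random-cluster model. The paper gives an explicit counterexample (Section 4): take $e=(u,v)$ an edge of $\partial\Lambda$ far from the corners and a boundary condition $\psi$ wiring $u$ to a distant $u'\in\partial\Lambda\setminus B(e,r)$ and $v$ to a distinct distant $v'\in\partial\Lambda\setminus B(e,r)$; at $p=1/2$, $q=3$ one has $\mu^\psi(e=1\mid E^c=1)=1/2$ but $\mu^\psi(e=1\mid E^c=0)\le 2/5$, independent of $r$ and $n$. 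Intuitively the boundary partition transports influence along $\partial\Lambda$, outside the graph, so your plan of revealing a disagreement cluster and bounding it by two-point-function decay cannot work: the disagreement never has to cross the intervening annulus at all. The paper's essential device, absent from your proposal, is the restricted class of \emph{side-homogeneous} boundary conditions, which is narrow enough that spatial mixing holds and yet is closed under conditioning on $\{E^c=1\}$ or $\{E^c=0\}$ in a sub-box, so that the recursion actually terminates. Without identifying such a class, neither your block-dynamics scheme nor any other recursive argument goes through. (The duality transport would survive a correctly restricted subcritical result, since only the wired boundary condition is needed on the dual side and wired is side-homogeneous.)

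Secondly, the lower bound is not ``essentially a coupon collector.'' Coupon-collector reasoning tacitly treats un-updated edges as carrying an independent signature, but connectivity is global, so updates made elsewhere can pull a local observable toward equilibrium before its own block is ever refreshed. The paper, following Hayes--Sinclair, handles this by choosing a special initial distribution $\pi_0$ that fixes the configuration on widely separated hub edges and closes all corridor edges between blocks, introduces an auxiliary chain $\{Y_t\}$ that never updates the corridor (so the hubs genuinely decouple and a monotonicity argument yields a quantitative bias in $\E[f(Y_T)]$), and then invokes a continuous-time disagreement-percolation bound (Lemma 6.2) to show the true chain agrees with the auxiliary one on each hub except with probability $\exp(-\Omega(r^{1/4}))$. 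That last bound is precisely the long-range control your sketch elides, and it is the crux of adapting Hayes--Sinclair beyond genuine spin systems.
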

	\noindent
	Theorem~\ref{thm:mainintro}, as stated, holds for the random-cluster
	model with so-called ``free" boundary conditions (i.e., there are no edges in
	$\Z^2\setminus \Lambda_n$). In fact, as a consequence of our proof,
	it also holds for the case of ``wired" boundary
	conditions (in which all vertices on the external face of~$\Lambda_n$ are connected). 
	
	The main component of our result is the analysis of the sub-critical regime $p<p_c$; the
	result for the super-critical regime $p>p_c$ follows from it easily by self-duality of~$\Z^2$
	and the fact that $p_c$ is exactly the self-dual point~\cite{BDC}. Our sub-critical upper bound analysis
	makes crucial use of the exponential decay of connectivities for $p<p_c$ established
	recently by Beffara and Duminil-Copin~\cite{BDC}, as discussed earlier. This analysis
	is reminiscent of similar results for spin systems (such as the Ising model), in which
	exponential decay of {\it correlations\/} has been shown to imply rapid mixing~\cite{MOS}. However, since the
	random-cluster model exhibits decay of connectivities rather than decay of correlations,
	we need to rework the standard tools used in these contexts.  In particular, we make three innovations. 
	
	First, the classical notion of ``disagreement percolation"~\cite{JB},
	which is used to bound the speed at which influence can propagate in~$\Z^2$ under
	the dynamics, has to be non-trivially extended to take account of the fact that in the
	random-cluster model influence spreads not from vertex to vertex but from cluster
	to cluster.  Second, we need to translate the decay of connectivities in the infinite volume~$\Z^2$
	(as proved in~\cite{BDC}) to a stronger ``spatial mixing" property in finite volumes~$\Lambda_n$,
	with suitable {\it boundary conditions\/} around the external face; in doing this 
	we use the machinery developed by Alexander in~\cite{KA}, but adapted to hold for arbitrary (not just integer)~$q$
	and for a suitable class of boundary conditions that we call ``side-homogeneous"
	(see Section~\ref{section:prelim} for a definition).
	Finally, while we follow standard recursive arguments in relating the mixing
	time in~$\Lambda_n$ to that in smaller regions~$\Lambda_{n'}$ for $n'\ll n$, our approach
	differs in its sensitivity to the boundary conditions on the smaller regions: previous
	applications for spin systems have typically required rapid mixing to hold in $\Lambda_{n'}$ for
	{\it arbitrary\/} boundary conditions, while in our case we require it to hold only for
	side-homogeneous conditions. This aspect of our proof is actually essential because the
	random-cluster model does {\it not} exhibit spatial mixing for arbitrary boundary conditions (see Section \ref{section:boundary-conditions}); our definition of side-homogeneous conditions is
	motivated by the fact that they are both restricted enough to allow spatial mixing to hold,
	and general enough to make the recursion go through. 
	Our lower bound proof uses technology from analogous results for spin systems of Hayes and Sinclair \cite{HS}, again adapted to the random-cluster setting.
	
	Our results leave open the question of the mixing time at the critical point $p=p_c(q)$. 
	A full treatment of this regime may lie beyond the scope of
	current knowledge. In particular, the nature of the phase transition is not fully
	understood, but is conjectured to be second-order for $q \le 4$ and
	first-order for $q>4$;
	see \cite{LMMRS} and \cite{DCST} for partial results in this direction. This would suggest that the mixing time
	at~$p_c$ should be polynomial (though presumably of larger order than $O(n^2 \log n)$)
	for $q \le 4$ and exponential in~$n$ for $q>4$.  Indeed, the former already follows
	for $q=2$ from recent results of Lubetzky and Sly~\cite{LS} on the Ising model
	at criticality, and the latter (for periodic boundary conditions) for the case of sufficiently
	large integer~$q$ from results
	of Borgs et al.~\cite{BCT,BFKTVV} for the Glauber dynamics of the Potts model,
	using in both cases Ullrich's comparison techniques mentioned earlier.
	
	We also leave open whether Theorem~\ref{thm:mainintro} holds in more generality for arbitrary boundary conditions.
	In fact, it is an interesting open question whether spatial mixing is a necessary condition for fast mixing in the random-cluster setting.
	
	The rest of the paper is organized as follows. We conclude this Introduction with a brief
	discussion of related work.  Section~\ref{section:prelim} contains some basic terminology
	and facts used throughout the paper.  In Sections~\ref{section:propagation}
	and~\ref{section:boundary-conditions} we derive our main analytical tools: bounds on
	the rate of disagreement percolation, and strong mixing for side-homogeneous boundary
	conditions, respectively.  In Sections~\ref{section:mixing} and ~\ref{section:mixing-lower} we apply these tools to derive
	our mixing time bounds in the sub-critical regime $p<p_c(q)$. Finally, in Section~\ref{section:super-critical} we use planar duality to extend the result to the super-critical
	regime $p>p_c(q)$, thus completing the proof of Theorem~\ref{thm:mainintro}.

	\medskip\noindent
	{\bf Related work.}\ \ The random-cluster model has been the subject of extensive
	research in both the applied probability and statistical physics communities, which
	is summarized in the book by Grimmett~\cite{Grimmett}.  A central open
	problem was to rigorously establish the phase transition in~$\Z^2$
	at $p_c(q) = {\sqrt{q}}/({\sqrt{q}+1})$,
	though this was not achieved until 2012 by Beffara and Duminil-Copin~\cite{BDC}.
	Predating the location of the phase transition, Alexander~\cite{KA} showed that
	exponential decay of connectivities in~$\Z^2$
	(as was also established for $p<p_c(q)$ in~\cite{BDC})
	implies spatial mixing in~$\Lambda_n$ for a certain class of boundary
	conditions and integer~$q$.
	
	The Glauber dynamics for the Ising model on~$\Z^2$ is essentially completely understood
	thanks to decades of research. At all parameter values below the critical point~$\beta_c$,
	the mixing time in $\Lambda_n$ is $O(n^2\log n)$, while above~$\beta_c$ it is
	$\exp(\Omega(n))$ (see~\cite{Mart} for a comprehensive treatment, and also~\cite{LS} for the behavior at~$\beta_c$). 
	Analogous results for the $q$-state Potts model
	follow from the random-cluster results of Beffara and Duminil-Copin~\cite{BDC}
	and Alexander~\cite{KA1},
	combined with the earlier work of Martinelli, Olivieri and Schonmann~\cite{MOS}
	relating spatial mixing to mixing times.  Our work in the present paper
	carries through a parallel program for the random-cluster dynamics.
	
	Glauber dynamics for the random-cluster model on any graph are poorly understood.
	As explained earlier, almost all the known results are derived by translating mixing time
	bounds from the Ising/Potts models using comparison techniques due to
	Ullrich~\cite{Ullrich2,Ullrich3,Ullrich4}.  Such translations typically incur a substantial
	overhead due to the use of linear algebra, hold only for integer~$q$,
	and give no insight into the dynamics.
	One exception is a direct polynomial bound on the mixing time for the random-cluster dynamics
	on graphs with bounded tree-width due to Ge and \v{S}tefankovi\v{c}~\cite{GS}.
	
	Finally, we mention some results on {\it non-local\/} dynamics for the random-cluster
	model in the simpler {\it mean-field\/} case, where $G$ is the complete graph $K_n$.
	The classical (non-local) Swendsen-Wang dynamics for the Ising and Potts models~\cite{SW}
	may be viewed as a dynamics for the random-cluster model via the standard
	coupling~\cite{ES}, so recent tight mixing time results for this case \cite{GSV,LNNP}
	translate directly for integer values of~$q$.  The recent paper~\cite{BSmf} proves
	similar results for the related Chayes-Machta dynamics, which applies to any
	(not necessarily integer) $q>1$.
	
	\section{Preliminaries}
	\label{section:prelim}
	
	In this section we gather a number of definitions and background results that we will refer to repeatedly. More details and proofs can be found in the books \cite{Grimmett,LPW}.
	
	\bigskip\noindent
	{\bf Random-cluster model on $\Z^2$.}\ \ Let $\mathbb{L} = (\Z^2,\mathbb{E})$ be the square lattice graph, where for $u,v \in \Z^2$, $(u,v) \in \mathbb{E}$ iff $d(u,v) = 1$ with $d(\cdot,\cdot)$ denoting the Euclidean distance. Let $\LL_n \subseteq \Z^2$ be the set of vertices of $\mathbb{L}$ contained in a square box of side length $n$, and let $\LL=(\LL_n,E_n)$ be the graph whose edge set $E_n$ contains all edges in $\mathbb{E}$ with both endpoints in $\LL_n$. We use $\boundary\LL$ to denote the {\it boundary} of $\LL$; that is, the set of vertices in $\LL_n$ connected by an edge in $\mathbb{E}$ to $\LL_n^c = \Z^2 \setminus \LL_n$.
	
	A {\it random-cluster configuration} on $\LL$ corresponds to a subset $A$ of $E_n$. Alternatively, it is sometimes convenient to think of $A$ as a vector in $\{0,1\}^{|E_n|}$ indexed by the edges, where $A(e)=1$ iff $e \in A$. Edges belonging to $A$ are called {\it open}, and edges in $E_n \setminus A$ {\it closed}.
	
	For any random-cluster configuration $A^c$ on $\LL_n^c$, we may consider the conditional random-cluster measure induced in $\LL_n$ by $A^c$. To make this precise, we introduce the standard concept of {\it boundary conditions}. A boundary condition for $\LL$ is a partition $\eta = (P_1,P_2,...,P_k)$ of $\boundary \LL$ which encodes how the vertices of $\boundary \LL$ are connected in a fixed configuration $A^c$ on $\LL_n^c$; i.e., for all $u,v \in \boundary \LL$, $u,v \in P_i$ iff $u$ and $v$ are connected by a path in $A^c$ (see Figure \ref{prelim:figure1}(a)). In this case we also say that $u$ and $v$ are {\it wired} in $\eta$.
	
	For $A \subseteq E_n$ and a boundary condition $\eta$, let $c(A,\eta)$ be the number of connected components of $(\LL_n,A)$ when the connectivities from the boundary condition $\eta$ are also considered. More precisely,  if $C_1,C_2$ are connected components of $A$, and there exist $u \in C_1 \cap \boundary\LL$ and $v \in C_2 \cap \boundary\LL$ such that $u$ and $v$ are wired in $\eta$, then $C_1$ and $C_2$ are identified as the same connected component in $A$. The random-cluster measure on $\LL$ with boundary condition $\eta$ and parameters $p \in (0,1)$ and $q > 0$ is then given by
	\begin{equation}\label{prelim:rc-measure}
	\mu^{\eta}_{\LL,p,q} (A) = \frac{p^{|A|}(1-p)^{|E_n \setminus A|}q^{c(A,\eta)}}{Z^{\eta}_{\LL,p,q}},
	\end{equation}
	where $Z^{\eta}_{\LL,p,q}$ is the normalizing constant, or {\it partition function}. (Cf. equation (\ref{eq:rcmeasure}) in the Introduction, which corresponds to the special case when the boundary condition $\eta$ is ``free"; see below.) When $\LL$, $p$ and $q$ are clear from the context we will just write $\mu^\eta$.
	
	\begin{figure*}
		\begin{center}
			\includegraphics[page=1,clip,trim=68 623 250 74,scale=1.1]{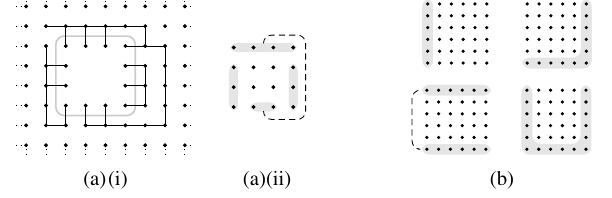}
			\caption{{\it {\rm (a): (i)} $\LL_4 \!\subset\! \Z^2$ with a random-cluster configuration $A^c$ in $\LL_4^c$, {\rm (ii)} the boundary condition induced in $\LL_4$ by $A^c$; {\rm (b)} examples of side-homogeneous boundary conditions.}}     
			\label{prelim:figure1}
		\end{center}
	\end{figure*}
	
	\bigskip\noindent
	{\bf Free, wired and side-homogeneous boundary conditions.}\ \ Some boundary conditions will be of particular interest to us. In the {\it free} boundary condition no two vertices of $\boundary \LL$ are wired. At the other extreme, in the {\it wired} boundary condition all vertices of $\boundary \LL$ are pairwise wired. We will use $\mu^0_{\LL,p,q}$ and $\mu^1_{\LL,p,q}$ to denote the random-cluster measures on $\LL$ with free and wired boundary conditions, respectively.
	
	We consider another class of boundary conditions which we call {\itshape side-homogeneous}. Let $L_1$, $L_2$, $L_3$, $L_4$ $\subset \boundary\LL$ be the sets of vertices on each {\it side} of the square box $\LL_n$. (A corner vertex of $\LL_n$ belongs to two sides.) The class of side-homogeneous boundary conditions contains all $\eta = (P_1,...,P_k)$ satisfying:
	\begin{enumerate}[(P1)]
		\setlength\itemsep{1em}
		\item $|P_i| > 1$ for at most one $i$; and
		\item  If $|P_i|> 1$, then $P_i$ is the union of some of the sets $L_j$; i.e., $P_i =\bigcup_{j \in \kappa} L_j$, for some $\kappa \subseteq \{1,2,3,4\}$.
	\end{enumerate}
	(See Figure \ref{prelim:figure1}(b).) Note that both the free and wired boundary conditions are side-homogeneous and there are in total 16 distinct side-homogeneous boundary conditions.
		
	\bigskip\noindent
	{\bf Monotonicity.}\ \ For any pair of boundary conditions $\eta$ and $\psi$, we say $\eta \le \psi$ if the partition $\eta$ is a refinement of $\psi$; i.e., if the connectivities induced by $\eta$ in $\boundary \LL$ are also induced by $\psi$.
	When $q \ge 1$, $\eta \le \psi$ implies $\mu^\eta_{\LL,p,q} \preceq \mu^\psi_{\LL,p,q}$, where $\preceq$ denotes {\it stochastic domination}; i.e.,
	$\mu^\eta_{\LL,p,q}({\cal E}) \le \mu^\psi_{\LL,p,q}({\cal E})$
	for all increasing events ${\cal E}$. (An event ${\cal E}$ is {\it increasing} if it is preserved by the addition of edges.)
	
	\bigskip\noindent
	{\bf Planar duality.}\ \ Let $\LL^*=(\LL_n^*,E_n^*)$ denote the planar dual of $\LL$ in the usual sense. That is, $\LL_n^*$ corresponds to the set of faces of $\LL$, and for each $e \in E_n$, there is a dual edge $e^* \in E_n^*$ connecting the two faces bordering $e$. The random-cluster measure satisfies
	$\mu_{\LL,p,q}(A) = \mu_{\LL^*,p^*,q}(A^*)$,
	where $A^*$ is the dual configuration to $A$ (i.e., $e^* \in E_n^*$ iff $e \in E_n$), and
	\[p^* = \frac{q(1-p)}{p+q(1-p)}.\]
	(This duality relation is a consequence of Euler's formula.) The unique value of $p$ satisfying $p=p^*$, denoted $p_{sd}(q)$, is called the {\it self-dual point}.
	
	\bigskip\noindent
	{\bf Infinite measure and phase transition.}\ \ The random-cluster measure may also be defined on the infinite lattice $\Z^2$ by considering the sequence of random-cluster measures on $\LL_n$ with free boundary conditions as $n \rightarrow \infty$. This sequence converges to a limiting measure $\mu_{\,\mathbb{L},p,q}$, which is known as the {\it random-cluster measure on $\mathbb{L}$}. The measure $\mu_{\,\mathbb{L},p,q}$ exhibits a phase transition corresponding to the appearance of an infinite connected component. That is, there exists a critical value $p = p_c(q)$ such that if $p < p_c(q)$ (resp., $p > p_c(q)$), then all components are finite (resp., there is at least one infinite component) with high probability.
	
	For $q \ge 1$, the exact value of $p_c(q)$ for $\mathbb{L}$ was only recently settled in breakthrough work by Beffara and Duminil-Copin \cite{BDC}, who proved the long standing conjecture
	\[p_c(q) = p_{sd} (q) = \frac{\sqrt{q}}{\sqrt{q}+1}.\]
	
	\bigskip\noindent
	{\bf Exponential decay of connectivies and spatial mixing.}\ \ In \cite{BDC}, it was also established that the phase transition is very sharp, meaning that as soon as $p < p_c(q)$ there is exponential decay of connectivities. More formally, for $q \ge 1$ and any fixed $p < p_c(q)$, there exist positive constants $C,\Ll$ such that for all $u,v \in \Z^2$,
	\begin{equation} \label{prelim:edc-infinite}
	\mu_{\,\mathbb{L},p,q}(u \leftrightarrow v) \le C {\e}^{-\Ll d(u,v)},\nonumber
	\end{equation}
	where $u \leftrightarrow v$ denotes the event that $u$ and $v$ are connected by a path of open edges. In work predating \cite{BDC}, Alexander \cite{KA} showed that exponential decay of connectivities implies exponential decay of {\it finite volume} connectivities uniformly over all boundary conditions. That is, for any boundary condition $\eta$ on $\LL$, and all $u,v \in \LL_n$,
	\begin{equation} \label{prelim:edc-finite}
	\mu^\eta_{\LL,p,q}(u \tightoverset{\scriptscriptstyle{\LL_n}}{\leftrightarrow} v) \le C {\rm e}^{-\Ll d(u,v)},
	\end{equation}
	where $u \tightoverset{\scriptscriptstyle{\LL_n}}{\leftrightarrow} v$ is the event that $u$ and $v$ are connected by a path of open edges in $\LL_n$.
	
	\begin{figure*}
		\begin{center}
			\includegraphics[page=1,clip,trim=90 627 250 60,scale=1.1]{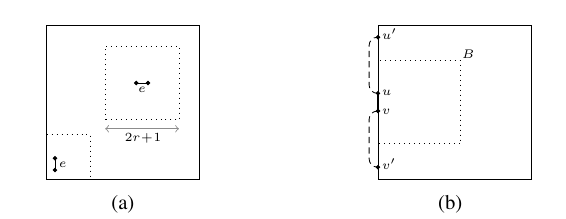}
			\caption{{\it {\rm (a)} $B(e,r)$ for two edges $e$ of $\LL$;
					{\rm (b)} a boundary condition $\psi$ where the spatial mixing property does not hold.}}     
			\label{prelim:figure2}
		\end{center}
	\end{figure*}
	
	The notion of decay of connectivities for the random-cluster model is analogous to the notion of decay of correlations in spin systems, which is ubiquitous in the spin systems literature. As in spin systems, we require in our analysis of the dynamics a stronger form of decay of connectivities known as {\it spatial mixing}.
	
	For $e \in E_n$, let $B(e,r) \subset \LL_n$ be the set of vertices in the minimal square box around $e$ such that $d(\{e\},v) \ge r$ for all $v \in \LL_n\setminus B(e,r)$. Note that if $d(\{e\},\boundary \LL) > r$, then $B(e,r)$ is just a square box of side length $2r+1$ centered at $e$; otherwise $B(e,r)$ intersects $\boundary \LL$ (see Figure \ref{prelim:figure2}(a)). Let $E(e,r)$ be the set of edges in $E_n$ with both endpoints in $B(e,r)$, and let $E^c(e,r) = E_n \setminus E(e,r)$. The spatial mixing property we use, which is slightly weaker than that defined in \cite{KA}, states that for all $e \in E_n$ and for every pair of configurations $A_1^c,A_2^c$ on $E^c = E^c(e,r)$, we have
	\begin{equation}\label{prelim:strong-mixing}
	\left|\mu^\eta_{\LL,p,q}(\,e=1\,|\,A_1^c\,)-\mu^\eta_{\LL,p,q}(\,e=1\,|\,A_2^c\,)\right| \le {\e}^{-\lambda r}
	\end{equation}
	for some constant $\lambda > 0$. 
	Alexander \cite{KA} showed that (\ref{prelim:edc-finite}) implies (\ref{prelim:strong-mixing}) for a certain class of boundary conditions $\eta$ when $q$ is an integer. In Section \ref{section:boundary-conditions} we will show, using the machinery developed in \cite{KA}, that (\ref{prelim:strong-mixing}) holds for all side-homogeneous boundary conditions $\eta$ for any (not necessarily integer) $q \ge 1$. 
	We shall see that (\ref{prelim:strong-mixing}) does not hold for arbitrary boundary conditions (see, e.g., Figure \ref{prelim:figure2}(b), together with the detailed explanation in Section \ref{section:boundary-conditions}).
		
	\bigskip\noindent
	{\bf Glauber dynamics.}\ \ The {\it Glauber dynamics}, which we denote $\GD$, is a local Markov chain on the random-cluster configurations of $\LL = (\LL_n,E_n)$ that is reversible with respect to $\mu^\eta_{\LL,p,q}$ for any boundary condition $\eta$; as a result, its stationary distribution is $\mu^\eta_{\LL,p,q}$. Given a random-cluster configuration $A_t \subseteq E_n$ at time $t$, a step of $\GD$ results in a new configuration $A_{t+1}$ as follows:
	\begin{enumerate}[(i)]
		\setlength\itemsep{1em}
		\item pick $e \in E_n$ u.a.r;
		\item let $A_{t+1} = A_t \cup \{e\}$ with probability
		\[\left\{\begin{array}{ll}
		\frac{p}{p+q(1-p)} & \mbox{if $e$ is a ``cut edge" in $({\LL}_n,A_t)$;} \\
		p & \mbox{otherwise;}
		\end{array}\right.\]
		\item else let $A_{t+1} = A_t \setminus \{e\}$.
	\end{enumerate}
	(We say $e$ is a {\it cut edge} in $(\LL_n,A_t)$ iff changing the current configuration of $e$ changes the number of connected components of $A_t$.) Note that this definition of the Glauber dynamics is equivalent to that in the Introduction, as can be seen by computing the ratios of the appropriate stationary weights $\mu^\eta_{\LL,p,q}(\cdot)$.
	
	\bigskip\noindent
	{\bf Mixing time and couplings.}\ \ Let $\Omega_{\textsc{\tiny RC}}$ be the set of random-cluster configurations of $\LL$, and let $\GD^t(X_0,\cdot)$ be the distribution of $\GD$ after $t$ steps starting from $X_0 \in \Omega_{\textsc{\tiny RC}}$. Let
	\begin{equation}\label{prelim:eq:mixing}
	\taumix(\varepsilon) := \max\limits_{X_0 \in \Omega_{\textsc{\tiny RC}}}\min\limits_t \left\{ ||\GD^t(X_0,\cdot)-\mu^\eta(\cdot)||_{\textsc{\tiny TV}} \le \varepsilon \right\}\nonumber,
	\end{equation}
	where $||\cdot||_{\textsc{\tiny TV}}$ denotes total variation distance. The {\it mixing time} of $\GD$ is given by $\taumix := \taumix(1/4)$. It is well-known that
	$\taumix(\varepsilon) \le \lceil \log_2 \varepsilon^{-1} \rceil\,\taumix$ for any positive $\varepsilon < 1/2$ 
	(see, e.g., \cite[Ch. 4.5]{LPW}).
	
	A {\it (one step) coupling} of the Markov chain $\GD$ specifies, for every pair of states $(X_t, Y_t) \in \Omega_{\textsc{\tiny RC}}^2$, a probability distribution over $(X_{t+1}, Y_{t+1})$ such that the processes $\{X_t\}$ and $\{Y_t\}$, viewed in isolation, are faithful copies of $\GD$, and if $X_t=Y_t$ then $X_{t+1}=Y_{t+1}$. The {\it coupling time}, denoted $\Tcoup$, is the minimum $T$ such that $\Pr[X_T \neq Y_T] \le 1/4$, starting from the worst possible pair of configurations $X_0$, $Y_0$. The following inequality is standard :
	$\taumix \le \Tcoup$ (see, e.g., \cite{LPW}).
	
	One coupling will be of particular interest to us. Namely, consider the coupling that couples the evolution of two copies of $\GD$, $\{X_t\}$ and $\{Y_t\}$, by using the same random $e \in E_n$ in step (i) and the same uniform random number $r \in [0,1]$ to decide whether to add or remove $e$ in steps (ii) and (iii). We call this the {\it identity coupling}. It is straightforward to verify that, when $q \ge 1$, the identity coupling is a {\it monotone coupling}, in the sense that if $X_t \subseteq Y_t$ then $X_{t+1} \subseteq Y_{t+1}$ with probability $1$. In fact, the identity coupling can be extended to a simultaneous coupling of {\it all} configurations that preserves the partial order $\subseteq$. Therefore, the coupling time starting from any pair of configurations is bounded by the coupling time for initial configurations $Y_0 = \emptyset$ and $X_0 = E_n$, which are the unique minimal and maximal elements in the partial order \cite{PW}.
	
	
	\section{The speed of disagreement percolation}
	\label{section:propagation}
	
	In spin systems, a central idea in the analysis of local Markov chains is to bound the speed at which information propagates. {\it Disagreement percolation} (or {\it path of disagreements}) arguments provide bounds of this sort  (see, e.g., \cite{JB,DSVW}). These arguments are based on the idea that in spin systems interactions only occur between neighboring sites, and thus if two configurations agree everywhere except in some region $A$, then it takes many steps 
	for a local Markov chain under the identity coupling
	to propagate these disagreements to regions that are far from $A$.
	
	In this section, we provide a bound on the speed of propagation of disagreements for the Glauber dynamics of the random-cluster model on $\LL = (\LL_n,E_n)$, under side-homogeneous boundary conditions. 
	A random-cluster configuration may exhibit long range interactions in the form of arbitrarily long paths, so disagreements could potentially propagate arbitrarily fast. Our insight is to restrict attention to pairs of configurations where one of them is {\it stationary}; i.e., it has law $\mu^\eta_{\LL,p,q}$. Then by (\ref{prelim:edc-finite}), the probability of long paths decays exponentially with the length of the path, which makes the long range interactions manageable.
	
	Throughout this section we will use the notation introduced in Section \ref{section:prelim}. 
	In addition, for a random-cluster configuration $A$ on $\LL$ and any $D \subseteq \LL_n$, we use $A(D)$ to denote the configuration induced by $A$ on the edges with both endpoints in $D$.
	Also, we use 
	$\boundary D$ and $\boundary E(D)$ to denote the set of vertices and edges, respectively, on the boundary of $D$; that is, $\boundary D$ is the set of vertices in $D$ connected by an edge in $E_n$ to $\LL_n \setminus D$ and
	$\boundary E(D) := \{(u,v) \in E_n: u \in D, v\not\in D\}$.
    (Note that if $v \in D \cap \boundary\LL$, then $v \not\in \boundary D$.) We are now ready to state and proof the main result of this section.
	
	\begin{lem} \label{lemma:propagation:dt}
		Let $p \!<\! p_c(q)$, $q \!\ge\! 1$ and consider two copies $\{X_t\}$, $\{Y_t\}$ of the Glauber dynamics on $\LL=(\LL_n,E_n)$ with a side-homogeneous boundary condition $\eta$. Assume $Y_0$ has law $\mu_{\LL,p,q}^\eta$ and that $X_0(B(e,r))=Y_0(B(e,r))$ for some $e\in E_n$ and $r \ge 1$. If the evolutions of $\{X_t\}$ and $\{Y_t\}$ are coupled using the identity coupling, then there exist absolute constants $c,C,\lambda > 0$ (independent of $r$ and $n$) such that, for $m = |E_n|$, $r \ge c$ and $1 \le k \le r^{1/4}/(4{\e}^2)$, we have
		\[\Pr[\,X_{km}(e) \neq Y_{km}(e)\,] ~\le~ C {\e}^{-\lambda r^{1/4}}.\]		
	\end{lem}
	
	\begin{proof} 
		Let  $B:=B(e,r)$. For some fixed $\ell$ (to be chosen later) and each $t \ge 0$ consider the event
		\begin{equation}
		\label{eq:only-small-cmpt}
		{\cal E}_{\ell,t} := \{u \tightoverset{\scriptscriptstyle{Y_t}}{\nleftrightarrow} v~~~\forall u,v \in B~\text{s.t.}~d(u,v)>\ell \},
		\end{equation}
		where $u\tightoverset{\scriptscriptstyle{Y_t}}{\nleftrightarrow}v$ denotes the event that $u$ and $v$ are not connected by a path in $Y_t(B)$. Let ${\cal E}_\ell \!:=\! \bigcap_{t=0}^{km} {\cal E}_{\ell,t}$; then,
		\begin{equation}\label{propagation:main-ineq}
		\Pr[\,X_{km}(e) \neq Y_{km}(e)\,] 
		\le \Pr[\,X_{km}(e) \neq Y_{km}(e) \,,\, {\cal E}_\ell\,] + \Pr[\,\neg {\cal E}_\ell\,].
		\end{equation}
		We bound each term on the right hand side of (\ref{propagation:main-ineq}) separately. 
		
		For any random-cluster configuration $A$ on $\LL$, let
		\begin{equation}\label{def:Gamma}
		\Gamma(A,B) := B \setminus  \bigcup\nolimits_{v \in \boundary B}\, C(v,A),
		\end{equation}
		where $C(v,A)$ is the set of vertices in the connected component of $v$ in $(\LL_n,A)$.
		
		\begin{figure*}
			\begin{center}
				\includegraphics[page=1,clip,trim=95 627 250 55,scale=1.1]{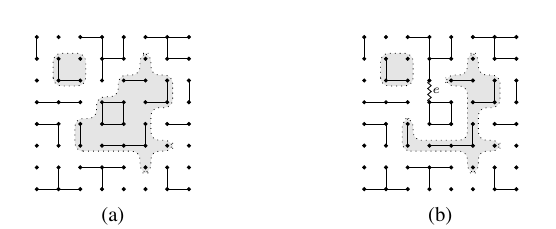}
				\caption{{\it {\rm (a)} a random-cluster configuration $A$ in $B$, with $R_0 = \Gamma(A,B)$ (shaded); {\rm (b)} $R_1$, assuming edge $e$ was updated in the first step.}}
				\label{propagation:figure2}
			\end{center}
		\end{figure*}
		
		Consider the sequence $R_0 \supseteq R_1 \supseteq ...$ of subsets (or {\it regions}) of $B$, such that $R_0 = \Gamma(Y_0,B)$ and
		\begin{equation}
		R_{t+1} = \begin{cases}
		R_t & \text{if no edge from $\boundary E(R_t)$ is updated at time $t$;}\\
		\\
		R_t \setminus C(a_t,Y_t) & \parbox[t]{.6\textwidth}{if $(a_t,b_t) \in \boundary E(R_t)$ with $a_t\in R_t$, $b_{t} \not\in R_t$ is the edge updated at time $t$;}
		\end{cases} \nonumber
		\end{equation}		
		(see Figure~\ref{propagation:figure2}). 
		The second case above applies 
		regardless of the state of $(a_t,b_t)$ in $X_{t+1}$ and $Y_{t+1}$.
		Observe that $R_t$ need not be a connected region of $\Z^2$
		and that every edge in $\boundary E(R_{t})$ is closed in both $X_t$ and $Y_t$. 
		
		The key observation is that, for all $t \ge 0$, $R_t$ is a region of $B$ in which $X_t(R_t) = Y_t(R_t)$. 
		We prove this by induction on $t$.		
		Assume $X_t(R_t) = Y_t(R_t)$ and let $\psi_{\textsc{X}}$ (resp., $\psi_{\textsc{Y}}$) be the boundary condition induced in $R_t$ by $X_t(\LL_n\setminus R_t)$ (resp., $Y_t(\LL_n\setminus R_t)$) and $\eta$. 
		Both $\psi_{\textsc{X}}$ and $\psi_{\textsc{Y}}$ are partitions of $\boundary R_t \cup (R_t \cap \boundary \LL)$. 
		(Recall that, by definition, if $v \in R_t \cap \boundary \LL$ then $v \not\in \boundary R_t$.) 
		
		We consider three cases based on the location of the edge $(a_t,b_t)$ with respect to $R_t$. First, if $\{a_t,b_t\} \cap R_t = \emptyset$, then clearly $X_{t+1}(R_{t+1}) = Y_{t+1}(R_{t+1})$. The second possibility is that $\{a_t,b_t\} \cap R_t = \{a_t\}$; if this is the case, then $R_{t+1} = R_t \setminus C(a_t,Y_t)$ by definition.
		Moreover,  $C(a_t,X_t) = C(a_t,Y_t)$
		because $X_t(R_t) = Y_t(R_t)$ and 
		all edges in $\boundary E(R_{t})$ are closed in both $X_t$ and $Y_t$;
		as a result $X_{t+1}(R_{t+1}) = Y_{t+1}(R_{t+1})$.
		
		Finally, when $\{a_t,b_t\} \subseteq R_t$ we show that $\psi_{\textsc{X}} = \psi_{\textsc{Y}}$, from which it follows that $X_{t+1}(R_{t+1}) = Y_{t+1}(R_{t+1})$.		
		First observe that every vertex in $\boundary R_t$ is a singleton in both $\psi_{\textsc{X}}$ and $\psi_{\textsc{Y}}$, since every edge in $\boundary E(R_{t})$ is closed in $X_t$ and $Y_t$.
		Therefore, 
		if $|R_t \cap \boundary \LL| \le 1$, then $\psi_{\textsc{X}}$, $\psi_{\textsc{Y}}$ are both the free boundary condition on $R_t$ and we are done.				
		Otherwise, assume that $|R_t \cap \boundary \LL| \ge 2$ and
		let $u$, $v$ be any two distinct vertices in $R_t \cap \boundary \LL$.
		If $u$ and $v$ are wired in $\eta$, then they are also wired in $\psi_{\textsc{X}}$ and $\psi_{\textsc{Y}}$. Moreover, if $u$ and $v$ are not wired in $\eta$, then property (P1) of side-homogeneous boundary conditions implies that one of them (say, $v$) is necessarily a singleton element in $\eta$. 
		Since there is no path of open edges from $v$ to $\LL_n\setminus R_t$ in either $X_t$ or $Y_t$, then $v$ is also a singleton element (and thus not wired to $u$) in both $\psi_{\textsc{X}}$ and $\psi_{\textsc{Y}}$.
		Hence, $u, v \in R_t \cap \boundary \LL$ are wired in $\psi_{\textsc{X}}$ (resp., $\psi_{\textsc{Y}}$) iff they are wired in $\eta$. 
		Since all the vertices in $\boundary R_t$ are singletons in both $\psi_{\textsc{X}}$ and $\psi_{\textsc{Y}}$, we conclude that $\psi_{\textsc{X}}=\psi_{\textsc{Y}}$.
		
		We now have that $X_{t}(R_{t}) = Y_{t}(R_{t})$ for all $t \ge 0$. Hence, 
		if both endpoints of $e$ lie in $R_{km}$, then $X_{km}(e)= Y_{km}(e)$.
		Also, we will choose $\ell \ll r$, 
		so when ${\cal E}_\ell$ occurs,		
		both endpoints of $e$ lie in $R_0$. 
		So, if $X_{km}(e) \neq Y_{km}(e)$, we may take $v_0$ to be the first endpoint of $e$ to be removed from $R_{t_0}$, at some time $t_0 \le km$. Let $e_1 \in \boundary E(R_{t_0})$ be the edge whose update is responsible for removing $v_0$ from $R_{t_0}$. Starting from $e_1$ we can then construct a sequence of edges $e_1,e_2,...$ such that $e_i=(u_i,v_i) \in \boundary  R_{t_{i-1}}$, with $u_i \in R_{t_{i-1}}$ and $v_i \not\in R_{t_{i-1}}$, is the edge that removes $v_{i-1}$ from $R_{t_{i-1}}$ at time $t_{i-1}$. Note that $t_0 > t_1 > ...$ and that the sequence $e_1,e_2,...$ stops once it reaches an edge $e_{\tilde{t}}$ incident to a vertex outside $R_0$. We call the sequence $e_1,e_2,...,e_{\tilde{t}}$ a {\it witness} for the fact that $X_{km}(e) \neq Y_{km}(e)$.
		
		The vertices $v_{i-1}$ and $u_{i}$ are in the same connected component in $Y_{t_{i-1}}$, and on the event ${\cal E}_\ell$, we have $d(v_{i-1},u_{i}) \le \ell$. Therefore, the number of witnesses of length $L$ is (crudely) at most $(4 (\ell+1)^2)^L$. Note also that every witness contains at least $r/(\ell + 1)$ edges, otherwise it cannot reach any of the vertices outside $R_0$. Moreover, the probability that a given witness of length $L$ is updated by the identity coupling in $km$ steps is $\binom{km}{L}\left(\frac{1}{m}\right)^L$. Hence,
		\begin{align}
		\Pr[\,X_{km}(e) \neq Y_{km}(e) \,,\, {\cal E}_\ell\,] 
		&\le~~ \sum_{L \ge \frac{r}{\ell+1}} {\binom{km}{L}\left(\frac{1}{m}\right)^L} (4 (\ell+1)^2)^L \nonumber\\
		&\le~~ \sum_{L \ge \frac{r}{\ell+1}} \left(\frac{4 {\e} k (\ell+1)^2}{L}\right)^{L}\nonumber\\
		&\le~~ \omega^{ \frac{r}{\ell+1}} \sum_{L \ge 0} \omega^{L},\nonumber
		\end{align}
		where $\omega = \frac{4 {\e} k (\ell+1)^3}{r}$. By taking $\ell = r^{1/4}-1$ and using the fact that $k \le r^{1/4}/(4 {\e}^2)$, we have
		\begin{equation}\label{propagation:first-term}
		\Pr[\,X_{km}(e) \neq Y_{km}(e) \,,\, {\cal E}_\ell\,] \le \frac{\e}{\e-1} \cdot {\e}^{-r^{3/4}}.
		\end{equation}
		
		Now we turn our attention to the second term on the right hand side of (\ref{propagation:main-ineq}). Let $N$ be the number of updates the identity coupling performs in $B$ up to time $km$, and let $M$ be the number of edges in $B$; i.e., $M := |E(e,r)| = \Theta(r^2)$. A Chernoff bound implies that $N > 2 k M$ with probability $\exp(-\Omega(k M))$, and thus
		\[\Pr[\,\neg{\cal E}_\ell\,] \le \Pr[\,\neg{\cal E}_\ell\,,\,N \le 2 k M\,]+{\e}^{-\Omega(k M)}.\]
		Observe that $\neg{\cal E}_\ell = \bigcup_{t=0}^{km} \neg{\cal E}_{\ell,t}$, and if the edge update at time $t$ occurs outside $B$, we have $\neg{\cal E}_{\ell,t} = \neg{\cal E}_{\ell,t+1}$. Hence, a union bound implies
		\begin{equation}\label{eq:dis-union-bound}
		\Pr[\,\neg{\cal E}_\ell\,] ~\le~ 2 k M \max_{0 \le t \le km}  \Pr[\,\neg{\cal E}_{\ell,t}\,]+{\e}^{-\Omega(k M)}.
		\end{equation}
		To bound $\Pr[\,\neg{\cal E}_{\ell,t}\,]$ we use the exponential decay of finite volume connectivities (\ref{prelim:edc-finite}). To do this, recall that $Y_0$ (and thus $Y_t$ for all $t \ge 0$) has law $\mu^\eta$. Also, there are only $O(r^4)$ pairs of vertices in $B$. Hence, (\ref{prelim:edc-finite}) and a union bound imply \[\Pr[\,\neg{\cal E}_{\ell,t}\,] \le O(r^4) \cdot {\e}^{-\Omega(\ell)} \le O(r^4) \cdot {\e}^{-\Omega(r^{1/4})}.\]
		Since $1 \le k \le r^{1/4}/(4{\e}^2)$ and $M = \Theta(r^2)$, (\ref{eq:dis-union-bound}) gives
		\begin{equation}
		\label{eq:neg:small-cmpt}
		\Pr[\,\neg{\cal E}_\ell\,] \le O(r^{6.25}) \cdot {\e}^{-\Omega(r^{1/4})} + {\e}^{-\Omega(r^{2})}.
		\end{equation}
		Together with (\ref{propagation:main-ineq}) and (\ref{propagation:first-term}), this implies that there exist constants $c,C,\Ll > 0$ such that for all $r \ge c$, we have 
		\[\Pr[\,X_{km}(e) \neq Y_{km}(e) \,] \le C {\e}^{-\lambda r^{1/4}},\]
		as desired.
	\end{proof}

	\section{Spatial mixing for side-homogeneous boundary conditions} \label{section:boundary-conditions}
	
	In this section we show that the spatial mixing property (\ref{prelim:strong-mixing}) holds for the class of side-homogeneous boundary conditions on $\LL=(\LL_n,E_n)$. Let $e\in E_n$ and let $B\!=\!B(e,r)$ for some $r \ge 1$. Spatial mixing holds when the influence on $e$ of the configuration in $E^c \!=\! E^c(e,r)$ decays exponentially with $r$. This is easy to establish when $B \cap \boundary\LL = \emptyset$, since such influence is present only if there are paths of open edges from $e$ to $\boundary B$, and, by (\ref{prelim:edc-finite}), the probability of such paths decays exponentially with $r$.  However, if $B \cap \boundary\LL \neq \emptyset$, the influence from $E^c$ could also propagate along $B \cap \boundary \LL$ via the boundary condition on $\LL$. This is why (\ref{prelim:strong-mixing}) does not hold for arbitrary boundary conditions, as the following concrete example illustrates. 
	
	With a slight abuse of notation, we use $\{E^c=1\}$ (resp., $\{E^c=0\}$) to denote the event that all the edges in $E^c$ are open (resp., closed). Suppose $e\!=\!(u,v)$ is an edge in $\boundary\LL$ that is far from the corners of $\LL$, and let $\psi$ be the boundary condition on $\LL$ where $u$ is wired to a vertex $u^\prime \in \boundary \LL \setminus B$ and $v$ is wired to a different vertex $v^\prime \in \boundary \LL \setminus B$ (see Figure \ref{prelim:figure2}(b)). When $p=1/2$ and $q=3$, we have $\mu^\psi(e=1|E^c=1) = 1/2$. Also, by considering a small box around $e$, is easy to check that $\mu^\psi(e=1|E^c=0) \le 2/5$. Both these bounds are independent of $r$ and $n$; consequently, $\psi$ does not have the spatial mixing property. 
	
	It turns out that side-homogeneous boundary conditions (and in particular property (P1)) rule out the possibility of influence propagating along $B \cap \boundary \LL$. As a result, we are able to establish the spatial mixing property for side-homogeneous boundary conditions, as stated in the following lemma; the proof uses the machinery developed in \cite{KA}.
	
	\begin{lem}\label{lemma:bc-smp} Let $p < p_c(q)$, $q \!\ge\! 1$ and let $\eta$ be a side-homogeneous boundary condition for $\LL = (\LL_n,E_n)$. For any $e \in E_n$, there exist constants $c,\lambda\!>\!0$ such that for all $r \ge c$ and every pair of configurations $A_1^c$, $A_2^c$ on $E^c$:
		\begin{equation}\label{bc:smp}
		\left|\mu^\eta(\,e=1\,|\,A_1^c\,)-\mu^\eta(\,e=1\,|\,A_2^c\,)\right| ~\le~ {\e}^{-\lambda r}.
		\end{equation}
	\end{lem}
	
	\begin{proof} Consider the measure $\mu^w := \mu^{\eta}(\,\cdot\,|E^c=1)$ on $B=B(e,r)$. Let $\Gamma(\cdot,\cdot)$ be defined as in (\ref{def:Gamma}). We derive the result from the following key fact, which we prove later.
		
		\begin{Claim}\label{claim:shbc} There exists a coupling $\pi$ of the distributions $\mu^\eta(\,\cdot\,|\,A_1^c\,)$, $\mu^\eta(\,\cdot\,|\,A_2^c\,)$ and $\mu^w$ such that $\pi(A_1,A_2,A_w) > 0$ only if $A_1 \subseteq A_w$, $A_2 \subseteq A_w$ and $A_1$, $A_2$ agree on all edges with both endpoints in $\Gamma(A_w,B)$.
		\end{Claim}
		
		\noindent Let $\Gamma^c(A_w,B) := \LL_n \setminus \Gamma(A_w,B)$. Given the coupling $\pi$, we have
		\begin{align}
		|\mu^\eta(\,e=1\,|\,A_1^c\,)-\mu^\eta(\,e=1 \,|\,A_2^c\,)| 
		& \le \pi(\,A_1(e) \neq A_2(e)\,) \nonumber\\
		& \le \pi\left(\,\Gamma^c(A_w,B) \cap \{e\} \neq \emptyset \,\right) \nonumber\\
		& \le \mu^w \left(\, \{e\} \leftrightarrow \boundary B  \,\right), \nonumber
		\end{align}
		where  $\{e\} \leftrightarrow \boundary B$ denotes the event that there is a path from $e$ to $\boundary B$. 
		
		Now, exponential decay of finite volume connectivities (\ref{prelim:edc-finite}) and a union bound over the boundary vertices imply
		\begin{equation}
		\mu^w (\, \{e\} \leftrightarrow \boundary B  \,) \le 2 C |\boundary B| {\e}^{-\Ll r}.
		\nonumber
		\end{equation}
		Since $|\boundary B| = \Theta(r)$, we obtain (\ref{bc:smp}) for all $ r \ge c$ for some constant $c>0$, and hence the lemma. \end{proof}
	
	\noindent
	We conclude this section by providing the missing proof of Claim \ref{claim:shbc}.
	
	\begin{proof} Let $\theta_1$ (resp., $\theta_2$) be the boundary condition induced on $B=B(e,r)$ by $A_1^c$ (resp., $A_2^c$) and $\eta$. Note that $\mu^{\theta_1}$, $\mu^{\theta_2}$ and $\mu^w$ are random-cluster measures on $B$ with different boundary conditions, and clearly  $\mu^w \succeq \mu^{\theta_1}$ and $\mu^w \succeq \mu^{\theta_2}$. Strassen's theorem (see, e.g., \cite{ST}) then implies the existence of monotone couplings $\mu_1$ for $\mu^{\theta_1}$ and $\mu^w$, and $\mu_2$ for $\mu^{\theta_2}$ and $\mu^w$. 
		(Recall that $\mu_1$ is a monotone coupling for $\mu^{\theta_1}$ and $\mu^w$ iff every sample $(A_{\theta_1},A_w)$ from $\mu_1$ satisfies $A_{\theta_1} \subseteq A_w$.) 
		We show next how to use $\mu_1$ and $\mu_2$ to construct the desired coupling $\pi$.
		
		First, let $\Delta := \Gamma(A_w,B)$ and let $\xi$ be the boundary condition induced in $\Delta$ by $\eta$ and the configuration of $A_w$ in $\Gamma^c(A_w,B)$.
		We construct $\pi$ as follows:
		\begin{enumerate}[(i)]
			\setlength\itemsep{1em}
			\item sample $(A_{\theta_1},A_w)$ from $\mu_1$;
			\item sample $A_{\theta_2}$ from $\mu_2(\,\cdot\,|\,A_w\,)$; and
			\item sample $A_\gamma$ from $\mu^\xi_{\Delta,p,q}$.
		\end{enumerate}
		Let $\pi$ be the distribution of
		\[(\{A_{\theta_1}\setminus E(\Delta)\} \cup A_\gamma,\\ 
		\{A_{\theta_2}\setminus E(\Delta)\}\cup A_\gamma,\{A_w\setminus E(\Delta)\}\cup A_\gamma)\]
		after step (iii), where $E(\Delta)$ denotes the set of edges with both endpoints in $\Delta$.
		
		A straightforward calculation reveals that $A_{\theta_2}$ has law $\mu^{\theta_2}$, and thus after step (ii) the distribution of $(A_{\theta_1},A_{\theta_2},A_w)$ has all the right marginals. Moreover, since $\mu_1$ and $\mu_2$ are monotone couplings, $A_{\theta_1} \subseteq A_w$ and $A_{\theta_2} \subseteq A_w$.
		
		We argue next that replacing the configuration in $\Delta$ with $A_\gamma$ in step (iii) has no effect on the distribution. For this, let $\xi_1$ (resp., $\xi_2$) be the boundary condition induced in $\Delta$ by $\eta$ and the configuration of $A_{\theta_1}$ (resp., $A_{\theta_2}$) in $\Gamma^c(A_w,B)$. (Note that $\xi$, $\xi_1$, $\xi_2$ are partitions of $\boundary \Delta \cup (\Delta \cap \boundary \LL)$.)
		
		We show that $\xi=\xi_1=\xi_2$. This is easy to see when $\Delta \cap \boundary \LL = \emptyset$, since in this case all three of them are the free boundary condition on $\Delta$. This is because $A_{\theta_1} \subseteq A_w$, $A_{\theta_2} \subseteq A_w$ and every edge from $\boundary \Delta$ to $\Delta^c$ is closed in $A_w$. 
		
		When $\Delta \cap \boundary \LL$ is not trivial, only vertices in $\Delta \cap \boundary \LL$ may be wired in $\xi$. Property (P1), together with the fact that every edge from $\boundary \Delta$ to $\Delta^c$ is closed in $A_w$, implies that two vertices from $\Delta \cap \boundary \LL$ are wired in $\xi$ iff they are wired in $\eta$. The same holds for $\xi_1$ and $\xi_2$; hence, $\xi=\xi_1=\xi_2$. (Note that this argument is essentially the same as the one used in the proof of Lemma \ref{lemma:propagation:dt} to show that the boundary conditions induced in $R_t$ by $X_t$ and $Y_t$ are the same.) 
		
		Finally, the {\it domain Markov property} of random-cluster measures (see, e.g., \cite{Grimmett}) ensures that indeed replacing $\Delta$ with $A_\gamma$ has no effect on the distribution. Hence, $\pi$ is a coupling of the measures $\mu^{\theta_1}$, $\mu^{\theta_2}$, and $\mu^w$ with all the desired properties.\end{proof}
	
	\section{Mixing time upper bound in the sub-critical regime}
	\label{section:mixing}
	
	In this section we prove our main result:~the upper bound for the mixing time in Theorem \ref{thm:mainintro} for $p < p_c(q)$. 
	We state two theorems whose combination establishes the desired upper bound for $p < p_c(q)$. In Theorem \ref{thm:crude-bound} we show that spatial mixing for the class of side-homogeneous boundary conditions, as established in Section \ref{section:boundary-conditions}, implies a bound of $O(n^2\log n(\log\log n)^2)$ for the mixing time of the Glauber dynamics on $\LL = (\LL_n,E_n)$, for any $n$ and any side-homogeneous boundary condition. The proof is inductive and makes crucial use of property (P2) of side-homogeneous boundary conditions, which ensures that for any $e \in E_n$ and $r \ge 1$, the boundary condition induced in $B(e,r)$ by the events $\{E^c=1\}$ or $\{E^c=0\}$ is also side-homogeneous.
	
	In Theorem \ref{thm:boosting} we show that a sufficiently good upper bound on the mixing time of the Glauber dynamics---in fact $O\left(n^{2.25}/\log n\right)$ suffices---can be bootstrapped to the desired upper bound of $O(n^2\log n)$. The proof of Theorem \ref{thm:boosting} crucially uses the bounds on the speed of propagation of disagreements from Section \ref{section:propagation}. Our proofs are inspired by those in the spin systems literature, in particular those in \cite{DSVW,Mart,MS}.
	
	\begin{thm}\label{thm:crude-bound} Let $p < p_c(q)$, $q \ge 1$ and let $\eta$ be a side-homogeneous boundary condition for $\LL = (\LL_n,E_n)$. There exists a fixed constant $C>0$ such that for all $n$, the mixing time of the Glauber dynamics in $\LL$ is at most $T(m) = C m \log m (\log \log m)^2$, where $m := |E_n| = \Theta(n^2)$.
	\end{thm}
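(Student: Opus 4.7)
The plan is to prove the bound by induction on $m$, via a block-dynamics reduction that crucially exploits property (P2) of side-homogeneous boundary conditions together with the spatial mixing property established in Lemma~\ref{lemma:bc-smp}. Let $T(m)$ denote the worst-case mixing time of the Glauber dynamics on $\LL=(\LL_n,E_n)$ with $|E_n|=m$, over all side-homogeneous BCs; the base case is handled by the trivial polynomial bound on $\taumix$ for small $m$.

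For the inductive step, I would partition $\LL_n$ into sub-boxes of side length $\ell=\Theta(n/\log^c n)$ for a suitable constant $c>0$, giving $k=\Theta(\log^{2c} n)$ blocks $B_1,\ldots,B_k$. Consider the block dynamics that, at each step, picks a random block uniformly and replaces its interior with a sample from the corresponding conditional stationary distribution. Using Lemma~\ref{lemma:bc-smp} in a one-step coupling of two copies of the block dynamics, the influence of the configuration outside $B_i$ on the update inside $B_i$ decays exponentially in $\ell$, which by a standard contraction-of-disagreements argument gives that the block dynamics mixes in $O(k\log m)$ steps.

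Next, each block update must be simulated by running the single-edge Glauber dynamics on $B_i$. The subtlety is that the BC induced on $B_i$ by the current exterior configuration together with $\eta$ need not be side-homogeneous, so the inductive hypothesis does not directly apply. Here property (P2) is crucial: when the exterior is replaced by either extreme $\{E^c(B_i)=1\}$ or $\{E^c(B_i)=0\}$, the induced BC on $B_i$ \emph{is} side-homogeneous. Exploiting monotonicity (valid since $q\ge 1$) with the identity coupling, I would sandwich the actual block update between two Glauber runs with the two extreme side-homogeneous BCs, each of which mixes in $T(\ell^2)$ steps by induction. The discrepancy between the two sandwich endpoints is controlled via Lemma~\ref{lemma:bc-smp} applied to the block.

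Combining the block-dynamics mixing with the per-block simulation cost yields a recursion of the shape
\[T(m) \;\le\; O(\log m)\cdot T(m/\log^{2c} m) \;+\; O(m\cdot\mathrm{poly}(\log\log m)).\]
Setting $m_0=m$ and $m_{i+1}=m_i/\log^{2c} m_i$, the recursion terminates after $d=O(\log m/\log\log m)$ levels; careful unrolling (noting that at each level $\log m_i$ decreases by only $O(\log\log m)$) gives $T(m)=O(m\log m(\log\log m)^2)$, as desired. The main obstacle is the third step: the induced BC on each sub-block is not side-homogeneous during the dynamics, yet the induction requires it to be. Resolving this through property (P2) by sandwiching against the side-homogeneous extremes, while ensuring the cumulative discrepancy does not blow up across $d$ recursive levels, is the central technical difficulty, and it is precisely this recursion depth that accounts for the $(\log\log m)^2$ overhead in the bound.
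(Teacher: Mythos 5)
You correctly identify the key ingredients---monotonicity of the identity coupling, property (P2), the spatial mixing lemma (Lemma~\ref{lemma:bc-smp}), and induction on box size---and using (P2) to make the induced boundary conditions side-homogeneous under the two extremal exteriors is exactly the right idea. But the recursion you set up does not close to the claimed bound, and this is a genuine gap. With $m_{i+1}=m_i/\log^{2c}m_i$ the depth is $d=\Theta(\log m/\log\log m)$, and unrolling $T(m)\le O(\log m)\cdot T(m/\log^{2c}m)+O(m\,\mathrm{poly}(\log\log m))$ compounds the $O(\log m_i)$ factors multiplicatively across all $d$ levels: already the hidden constant $c_0>1$ contributes $c_0^{d}=m^{\Theta(1/\log\log m)}$, which dwarfs $\log m\,(\log\log m)^2$, and the residual product $\prod_i\log m_i$ only makes matters worse. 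So the unrolled bound is $m^{1+\Omega(1/\log\log m)}$, not $O(m\log m(\log\log m)^2)$. (A separate concern: a partition into disjoint blocks of side $\Theta(n/\mathrm{polylog}\,n)$ does not obviously yield a contracting block dynamics; overlapping blocks are usually required for that step.)

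The paper closes the induction in a single step by shrinking far more aggressively. Fix $e\in E_n$, take $B=B(e,r)$ with $r=\Theta(\log m)$, so $M:=|E(e,r)|=\Theta((\log m)^2)$. Introduce auxiliary chains $\{Z_t^{+}\}$, $\{Z_t^{-}\}$ that execute only the updates falling in $B$, start from $E_n$ and $\emptyset$, and are coupled with $\{X_t\},\{Y_t\}$ via the identity coupling; monotonicity gives $Z_t^{-}\subseteq Y_t\subseteq X_t\subseteq Z_t^{+}$. A triangle inequality then splits $\Pr[X_T(e)\neq Y_T(e)]$ into two ``local equilibration'' terms, each handled by invoking the inductive hypothesis for the $\Theta((\log m)^2)$-edge box (whose boundary conditions under the all-open/all-closed exteriors are side-homogeneous by (P2)), plus a spatial-mixing term bounded by Lemma~\ref{lemma:bc-smp} at distance $r=\Theta(\log m)$. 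Since $T(M)=O\bigl(M\cdot\log\log m\cdot(\log\log\log m)^2\bigr)$, one has $T(m)\cdot M/m\gtrsim(\log m)\,T(M)$, so the single inductive step closes: the size drops from $m$ to $\Theta((\log m)^2)$ in one application, and there is no compounding of $\log m$ factors across levels. To repair your argument you would need to recurse to $\mathrm{polylog}(m)$-sized boxes in one step, as the paper does, rather than peeling off a $\log^{2c}m$ factor per level.
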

	
	\begin{proof}
		We bound the coupling time $\Tcoup$ of the identity coupling; the result then follows from the fact that $\taumix \le \Tcoup$. Consider two copies $\{X_t\}$, $\{Y_t\}$  of the Glauber dynamics coupled with the identity coupling. We may assume $X_0 = E_n$ and $Y_0 = \emptyset$, since we know from Section \ref{section:prelim} that this is the worst pair of starting configurations. We prove that 
		\[\Pr[\,X_T(e) \neq Y_T(e)\,] = O\left(m^{-2}\right)\]
		for $T=T(m)$ and for all $e \in E_n$. The bound for the coupling time then follows from a union bound over the $m$ edges.
		
		To bound $\Pr[\,X_T(e) \neq Y_T(e)\,]$, we introduce two additional copies $\{Z^+_t\}$, $\{Z^-_t\}$ of the Glauber dynamics. These two copies will only update edges with both endpoints in the box $B = B(e,r)$, for a suitable $r$ we choose later. We set $Z^+_0 = X_0 = E_n$ and $Z^-_0 = Y_0 = \emptyset$. The four Markov chains $\{X_t\}$, $\{Y_t\}$, $\{Z^+_t\}$ and $\{Z^-_t\}$ are coupled with the identity coupling, and the updates outside $B$ are ignored by $\{Z^+_t\}$ and $\{Z^-_t\}$. By monotonicity of the identity coupling, we have
		$Z^-_t \subseteq Y_t \subseteq X_t\subseteq Z^+_t$
		for all $t \ge 0$. Hence,
		\begin{flalign}
		\Pr[\,X_t(e) \neq Y_t(e)\,] &\le \Pr[\,Z^+_t(e) \neq Z^-_t(e) \,] \nonumber\\
		&= \Pr[\,Z^+_t(e)=1\,] - \Pr[\,Z^-_t(e)=1\,]. \nonumber
		\end{flalign}
		
		The stationary distributions of $\{Z^+_t\}$ and $\{Z^-_t\}$ are $\mu^\eta_{\LL}(\,\cdot\,|\, E^c = 1 \,)$ and $\mu^\eta_{\LL}(\,\cdot\,|\, E^c = 0)\,$, respectively, where as usual $\{E^c=1\}$ (resp., $\{E^c=0\}$) denotes the event that all the edges in $E_n \!\setminus\! E(e,r)$ are open (resp., closed).
		The triangle inequality then implies
		\begin{align}
		\Pr [\, X_t(e) \neq Y_t(e)\,] 
		~\le~ & \left|\Pr[\,Z^+_t(e)=1\,] - \mu^\eta_{\LL}(\,e=1\,|\, E^c = 1\,)\right|\label{mixing:plus-bound}\\
		& +\left|\mu^\eta_{\LL}(\,e=1\,|\, E^c = 1\,) - \mu^\eta_{\LL}(\,e=1\,|\, E^c = 0\,)\right| \label{mixing:smp-bound}\\
		& +\left|\mu^\eta_{\LL}(\,e=1\,|\, E^c = 0\,) - \Pr[\,Z^-_t(e)=1\,] \right|. \label{mixing:minus-bound}
		\end{align}
		
		The chains $\{Z^+_t\}$ and $\{Z^-_t\}$ are (lazy) Glauber dynamics on the smaller square box $B$. Moreover, the boundary conditions induced in $B$ by $\eta$ and the events $\{E^c=1\}$, $\{E^c=0\}$ are side-homogeneous. Hence, we proceed inductively. 
		
		First note that for any fixed $m_0$, the result holds for all square boxes of volume at most $m_0$ by simply adjusting the constant $C = C(m_0)$. Now, let $r = c \log m$ for some constant $c$ we choose later, and assume the mixing time bound holds for square boxes of volume $M := |E(e,r)|$ with side-homogeneous boundary condition. After $T(m)$ steps, the expected number of updates in $B$ is 
		\[T(m) \frac{M}{m} =  C M \log m (\log \log m)^2  \ge \lceil 4 \log_2 m \rceil T(M),\]
		where we choose $m_0$ such that the last inequality holds for all $m > m_0$. Hence, a Chernoff bound implies that the number of updates in $B$ is at least $\lceil 2 \log_2 m \rceil T(M)$ with probability at least $1-m^{-2}$.
		
		The induction hypothesis implies that the mixing time of $\{Z^+_t\}$ is at most $T(M)$. Hence, if indeed $\lceil 2 \log_2 m \rceil T(M)$ updates happen in $B$, then 
		\[||\,Z^+_t(\cdot) - \mu^\eta_{\LL}(\,\cdot\,|\, E^c = 1\,)\,||_{\textsc{TV}} \le \frac{1}{m^2}.\]
		(Here we used the fact from Section \ref{section:prelim} that $\taumix(\varepsilon) \le \lceil \log_2 \varepsilon^{-1} \rceil\,\taumix$.) Combining this with the above Chernoff bound, we have
		\begin{equation}\label{mixing:localization-bounds}
		\left|\Pr[\,Z^+_t(e)=1\,] - \mu^\eta_{\LL}(\,e=1\,|\, E^c = 1)\right| \le \frac{2}{m^2}. \nonumber
		\end{equation}
		The quantity in (\ref{mixing:minus-bound}) is bounded similarly. 
		
		Finally, taking $c = 2/\Ll$, the spatial mixing property (Lemma \ref{lemma:bc-smp}) implies that (\ref{mixing:smp-bound}) is at most $1/m^2$. Putting these bounds together we get
		\begin{equation}
		\Pr[\,X_t(e) \neq Y_t(e)\,] \le \frac{5}{m^2},
		\nonumber
		\end{equation}
		as desired.\end{proof}
	
	\begin{thm}\label{thm:boosting}
		Let $p < p_c(q)$, $q \ge 1$ and let $m_0$, $c$ be sufficiently large and sufficiently small positive constants, respectively. Assume that the mixing time of the Glauber dynamics in any square box of volume $m_0$ with side-homogeneous boundary conditions is at most $\frac{c\, m_0^{9/8}}{\log_2 m_0}$.
		Then, the mixing time of the Glauber dynamics in $\LL$ with side-homogeneous boundary conditions is $O(n^2\log n)$.
	\end{thm}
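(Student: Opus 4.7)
The plan is to prove the $O(n^2\log n)$ mixing bound via a two-phase coupling argument that combines the small-box mixing hypothesis with the disagreement percolation bound of Lemma~\ref{lemma:propagation:dt}. The overall structure is modelled on standard boosting arguments from the spin-systems literature \cite{DSVW,Mart,MS}, adapted to use Lemma~\ref{lemma:propagation:dt} in order to eliminate the $(\log\log m)^2$ overhead left by Theorem~\ref{thm:crude-bound}. I would use the identity coupling of two chains $\{X_t\}$ and $\{Y_t\}$ with $Y_0\sim\mu^\eta$ stationary and $X_0=E_n$, aiming to show $\Pr[X_T\neq Y_T]\le 1/4$ at $T=O(m\log m)$.

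First I would fix a scale $r = c_1(\log m)^4$ so that the disagreement percolation bound satisfies $C\exp(-\lambda r^{1/4})\le m^{-C_0}$ for a suitably large constant $C_0$, and a slightly enlarged radius $r' = r + c_2\log m$ so that the spatial mixing bound (Lemma~\ref{lemma:bc-smp}) between $B(e,r)$ and $\partial B(e,r')$ is also at most $m^{-C_0}$. Then I would split the running time as $T = T_1+T_2$ with $T_1=T_2=O(m\log m)$.

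In Phase 1, for each $e\in E_n$, I would introduce auxiliary chains $\{Z^+_t\}$ and $\{Z^-_t\}$ confined to $B'=B(e,r')$ with extremal boundary conditions, which are side-homogeneous by property (P2). The monotone sandwich gives $Z^-_t\subseteq X_t(E(B')),Y_t(E(B'))\subseteq Z^+_t$ at all times. Applying the small-box hypothesis at scale $M=|E(B')|=\Theta((\log m)^8)$ bounds the mixing time of $Z^\pm$; a Chernoff bound on their internal updates during $T_1$ global steps, combined with the spatial mixing of Lemma~\ref{lemma:bc-smp} for the inner box $B(e,r)$, shows that $X_{T_1}$ and $Y_{T_1}$ agree on $B(e,r)$ with probability at least $1-m^{-C_0}$. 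In Phase 2, conditioning on this high-probability agreement and using that $Y_{T_1}\sim\mu^\eta$ (up to an $O(m^{-C_0})$ perturbation from the conditioning), I would apply Lemma~\ref{lemma:propagation:dt} with $k=\lfloor r^{1/4}/(4\e^2)\rfloor=\Theta(\log m)$, yielding $\Pr[X_T(e)\neq Y_T(e)]\le m^{-C_0+1}$. A union bound over the $m$ edges then gives $\Pr[X_T\neq Y_T]\le 1/4$ for $C_0$ chosen large enough, and therefore $\taumix\le \Tcoup = O(m\log m)=O(n^2\log n)$.

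The main obstacle will be the parameter calibration in Phase 1: the specific exponent $9/8$ in the hypothesis is tuned precisely so that the $O(\log m\cdot T(M))$ internal updates needed to couple the auxiliary chains on $B(e,r)$ to TV precision $m^{-C_0}$ match the $\Theta(M\log m)$ updates in $B'$ available from $T_1=O(m\log m)$ global steps, at the critical scale $M=\Theta((\log m)^8)$. The hypothesis constants ($m_0$ sufficiently large and $c$ sufficiently small) are essential for absorbing the residual numerical factors in this match. A secondary technicality is the rigorous passage from Phase 1's high-probability agreement on $B(e,r)$ to the deterministic hypothesis required by Lemma~\ref{lemma:propagation:dt}; this is handled by conditioning on the agreement event and exploiting that $Y$ is stationary throughout, so its conditional law differs from $\mu^\eta$ by only $O(m^{-C_0})$ in total variation, which is absorbed into the final bound.
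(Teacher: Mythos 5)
Your high-level ingredients are right—monotone sandwich with chains confined to a small box, spatial mixing, and Lemma~\ref{lemma:propagation:dt}—but there are two serious problems with the way you try to assemble them, and both stem from trying to do the boosting in a single shot rather than recursively.

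First, the hypothesis of the theorem is about boxes of volume \emph{exactly} $m_0$, where $m_0$ is a fixed (large) constant. You apply it at scale $M=|E(B')|=\Theta((\log m)^8)$, which grows with $n$. Nothing in the hypothesis licenses this; you would have to first bootstrap from the constant scale $m_0$ up to scale $\Theta((\log m)^8)$, which is precisely what the paper's recursion accomplishes and what your argument omits.

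Second, even granting yourself the $cM^{9/8}/\log_2 M$ bound at scale $M$, the calibration in your Phase~1 does not work. To couple $Z^\pm$ to total variation precision $m^{-C_0}$ you need $\Theta(\log m)\cdot\taumix(B')=\Theta(\log m\cdot M^{9/8}/\log_2 M)=\Theta((\log m)^{10}/\log\log m)$ internal updates, whereas $T_1=O(m\log m)$ global steps only supply $\Theta(M\log m)=\Theta((\log m)^9)$ of them. The needed budget exceeds the available one by a factor of $\Theta(\log m/\log\log m)$, so Phase~1 cannot deliver agreement on $B(e,r)$ with the probability you claim. (Backing off to precision $1/\mathrm{poly}(M)=1/\mathrm{poly}(\log m)$ makes the budgets match but is then far too weak to union-bound over the $m$ edges.) The exponent $9/8$ is in fact tuned for the \emph{base case} of a recursion at the constant scale $m_0$: running $lm$ steps with $l=\Theta(m_0^{1/8})$ yields $\Theta(m_0^{9/8})$ internal updates, which matches $\lceil\log_2 m_0\rceil\cdot\taumix\le c\,m_0^{9/8}$ and gives the $Z^\pm$ chains TV precision $1/m_0$.

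The paper instead defines $\rho(k):=\max_e\Pr[X_{km}(e)\neq Y_{km}(e)]$ (with $X_0=E_n$, $Y_0\sim\mu^\eta$) and, using monotonicity of the identity coupling plus Lemma~\ref{lemma:propagation:dt} applied to the interval $[km,2km]$ with $r=\Theta(k^4)$, derives the recurrence $\rho(2k)\le Ck^8\rho(k)^2+\e^{-\Lambda k}$. The crucial feature you are missing is the \emph{quadratic} term $\rho(k)^2$: the event ``$e$ disagrees at time $2km$'' is bounded by ``some edge in $B(e,r)$ disagrees at time $km$'' (cost $\Theta(r^2)\rho(k)=\Theta(k^8)\rho(k)$) times ``$e$ still disagrees by time $2km$ given it disagreed at time $km$'' (cost $\rho(k)$ again, by monotonicity and reusing the worst-case starting pair). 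Iterating this self-squaring recurrence from a starting value that is merely a small \emph{constant}—established at scale $l=\Theta(m_0^{1/8})$ via the small-box hypothesis (Claim~5.3)—produces $\rho(k)\le\e^{-\Omega(k)}$, and then $k=\Theta(\log m)$ gives the $m^{-\Omega(1)}$ per-edge probability needed for the union bound. Your two-phase argument is essentially a single pass of this recursion, which is not enough; without the repeated squaring there is no way to get from the constant-precision base case to the $m^{-\Omega(1)}$ precision the final union bound requires.
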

	
	\begin{proof} Let $m \!:=\! |E_n| \!=\! \Theta(n^2)$ and the let $\eta$ be a side-homogeneous boundary condition for $\LL$. Also, let $\{X_t\}$, $\{Y_t\}$ be two copies of the Glauber dynamics in $\LL$ coupled with the identity coupling. We prove that for $1 \le k = o(m^{1/8})$, we have 
		\begin{equation}
		\Pr[X_{k m}(e) \neq Y_{k m}(e)] \le {\e}^{-\Omega(k)} \label{eq:mixing-up-boosting}
		\end{equation}
		for any $e \in E_n$ and any pair of initial configurations $X_0,Y_0$. Hence, for some $k=O(\log m)$ we have $\Pr[X_{k m}(e) \neq Y_{k m}(e)] \le 1/(4m)$, and a union bound over the edges implies $\taumix \le \Tcoup = O(m\log m)$, as desired.
		
		By the monotonicity of the identity coupling discussed in Section \ref{section:prelim},
		it suffices to bound $\Pr[X_{k m}(e) \neq Y_{k m}(e)]$ for the case when $X_0=E_n$ and $Y_0=\emptyset$.
		For this, let $\{Z_t\}$ by another instance of the Glauber dynamics in $\LL$, with $Z_0$ sampled according to $\mu^\eta$, and coupled with $\{X_t\}$ and $\{Y_t\}$
		via the identity coupling. Then:
		\begin{equation}
		\label{eq:triangle}
		\Pr[X_{k m}(e) \neq Y_{k m}(e)] \le \Pr[X_{k m}(e) \neq Z_{k m}(e)] +\Pr[Z_{k m}(e) \neq Y_{k m}(e)]. 
		\end{equation}
		Let
		\[
		\rho(k) := \max_{e \in E_n}\, \max\{ \Pr[\,X_{km}(e) \neq Z_{km}(e)\,],\Pr[Z_{k m}(e) \neq Y_{k m}(e)]\},
		\]
		so that $\Pr[X_{k m}(e) \neq Y_{k m}(e)] \le 2\rho(k)$.		
		Let ${\cal F}_k$ be the event $\{X_{km}(B(e,r))=Z_{km}(B(e,r))\}$ 
		for a fixed $r$ we choose later,
		and, as in~\eqref {eq:only-small-cmpt}, for $\ell = r^{1/4}-1$ let 
		$$
		{\cal E}_{\ell,t} := \{u \tightoverset{\scriptscriptstyle{Z_t}}{\nleftrightarrow} v~~~\forall u,v \in B(e,r)~\text{s.t.}~d(u,v)>\ell \}.
		$$
		Then,
		\begin{align}
		\Pr[X_{2k m}(e) \neq Z_{2k m}(e)]  \le &
		\Pr[X_{2k m}(e) \neq Z_{2k m}(e) \,|\, \neg{\cal F}_k]\Pr[\neg{\cal F}_k] \label{eq:triang:1}\\
		&+ 	\Pr[X_{2k m}(e) \neq Z_{2k m}(e), {\cal F}_k, \bigcap\nolimits_{t=km}^{2km} {\cal E}_{\ell,t}] \\
		&+ 
		\Pr[\bigcup\nolimits_{t=km}^{2km} \neg{\cal E}_{\ell,t}].
		\end{align}
		The monotonicity of the identity coupling implies that 
		$$
		\Pr[\, X_{2km}(e) \neq Z_{2km}(e)\,|\,\neg{\cal F}_k \,] \le 2\rho(k),
		$$
		and a union bound over the edges in $B(e,r)$ yields that $\Pr[\,\neg{\cal F}_k \,] \le \Theta(r^2) \rho(k)$. Hence, the right-hand side of~\eqref{eq:triang:1} is at most $\Theta(r^2) \rho(k)^2$.
		From~\eqref{propagation:first-term}, we also have that if $k \le r^{1/4}/(4e^2)$, then
		$$
		\Pr[X_{2k m}(e) \neq Z_{2k m}(e), {\cal F}_k, \bigcap\nolimits_{t=km}^{2km} {\cal E}_{\ell,t}] \le \frac{\e}{\e-1} \cdot {\e}^{-r^{3/4}},		
		$$
		and from~\eqref{eq:neg:small-cmpt} that
		$$
		\Pr[\bigcup\nolimits_{t=km}^{2km} \neg{\cal E}_{\ell,t}] \le O(r^{6.25}) \cdot {\e}^{-\Omega(r^{1/4})} + {\e}^{-\Omega(r^{2})}.
		$$		
		Putting these bounds together and setting $r = \Theta(k^{4})$, we obtain that for any $e \in E_n$
		$$
		\Pr[X_{2k m}(e) \neq Z_{2k m}(e)] \le C k^8 \rho^2(k) + {\e}^{-\Ll k},
		$$
		for suitable constants $C>1$ and $\Ll > 0$. 
		The same bound can be analogously derived for $\Pr[Z_{2k m}(e) \neq Y_{2k m}(e)]$, so that
		\begin{equation}
		\rho(2k) \le C k^8 \rho^2(k) + {\e}^{-\Ll k}.\label{eq:mixing-time-ub-rec}\nonumber
		\end{equation}		
		(Note that since $r = \Theta(k^{4})$ and $r < n$, our proof of inequality (\ref{eq:mixing-up-boosting}) does not hold for 
		arbitrarily large $k$; hence the restriction $k = o(m^{1/8})$.)
		
		Now, let
		\[\phi(k) := 2^8(C k^8+1)\max\{\rho(k),{\e}^{-\Ll k/2}\}.\]
		We show next that $\phi(2k) \le \phi(k)^2$. For this observe that $2^8(C(2k)^8+1)\rho(2k) \le \phi(k)^2$. Hence, if $\rho(2k) \ge e^{-\Ll k}$, we get $\phi(2k) \le \phi(k)^2$ directly. Otherwise, we have 
		\[\phi(2k) \le (2^8)^2(C k^8+1){\e}^{-\Ll k} \le \phi(k)^2.\] 
		Thus, for any integer $\alpha > 0$, we get $\phi(k) \le \phi(k/2^\alpha)^{2^\alpha}$. The result follows from the following fact, which provides a stopping point for this recurrence. 
		
		\begin{Claim}\label{claim:base-case}
			Let $l = \frac{m_0^{1/8}}{2(8C{\e})^{1/8}}$. Then $\rho(l) \le \frac{1}{2^8\e (C l^8 + 1)}$ for a sufficiently large~$m_0$.
		\end{Claim}
		\noindent
		As a result, $\phi(l) \le 1/{\e}$ for a sufficiently large constant $l$, and thus $\rho(k) \le \phi(k) \le \exp(-k/l)$.
		Since
		$\Pr[X_{k m}(e) \neq Y_{k m}(e)] \le 2 \rho(k)$,
		the result follows.
	\end{proof}
	
	\noindent
	We conclude this section with the proof of Claim \ref{claim:base-case}. The proof is similar to that of Theorem 5.1 and makes crucial use of the hypothesis on the mixing time in square boxes of volume $m_0$.
	
	\begin{proof}	Let $e \in E_n$ and choose $r^\prime$ such that $|E(e,r^\prime)|=m_0$. (Note that as a result $r^\prime=\Theta(m_0^{1/2})$.) The proof proceeds along the same lines as that of Theorem \ref{thm:crude-bound}. In fact we consider the same processes $\{Z^+_t\}$, $\{Z^-_t\}$, where $Z^+_0 = X_0 = E_n$, $Z_0^- = Y_0=\emptyset$ and $\{Z^+_t\}$, $\{Z^-_t\}$ only update edges with both endpoints in $B(e,r^\prime)$. As before, we couple the four chains $\{X_t\}$, $\{Y_t\}$, $\{Z^+_t\}$, $\{Z^-_t\}$ with the identity coupling, ignoring the updates outside $B(e,r^\prime)$ in $\{Z^+_t\}$ and $\{Z^-_t\}$. The monotonicity of the identity coupling then implies that $Z^-_t \subseteq Y_t \subseteq X_t \subseteq Z^+_t$ for all $t\ge 0$. Hence, we obtain inequality (\ref{mixing:plus-bound})-(\ref{mixing:minus-bound}). 
		
		Lemma \ref{lemma:bc-smp} implies that (\ref{mixing:smp-bound}) is at most $\exp(-\Omega(r^\prime))$. To bound (\ref{mixing:plus-bound}), note that if we run the identity coupling for $l m$ steps, a Chernoff bound implies that with probability at least $1-\exp(-l m_0/3)$ the number of updates in $B(e,r^\prime)$ is at least $A^{-1} m_0^{9/8}$, where $A = (8C{\e})^{1/8}$. If indeed this many steps hit $B(e,r^\prime)$, then the hypothesis of the theorem  (with $c=A^{-1}$) implies
		\begin{align}
		\left|\Pr[\,Z^+_t(e)=1\,] -\mu^\eta_{\LL}(\,e=1\,|\, E^c = 1)\right|
		\le ||\,Z^+_t(\cdot) - \mu^\eta_{\LL}(\,\cdot\,|\, E^c = 1\,)\,||_{\textsc{TV}} \le \frac{1}{m_0}.\nonumber
		\end{align}
		(Here we also used the fact that $\taumix(\varepsilon) \le \lceil \log_2 \varepsilon^{-1} \rceil \taumix$.) We do the same to bound (\ref{mixing:minus-bound}), and then
		\begin{align}
		\Pr[\, X_{lm}(e) \neq Y_{lm}(e) \,] &\le \frac{2}{m_0} + {e}^{-\Omega(m_0^{9/8})} + {e}^{-\Omega(m_0^{1/2})} \notag\\
		&\le \frac{4}{m_0} = \frac{4}{(2 A l)^{8}} \le \frac{1}{2^8\e (C l^8 + 1)},\nonumber
		\end{align}
		for a sufficiently large constant $m_0$. Hence, $\phi(l) \le 1/\e$ for a sufficiently large $l$, as desired.\end{proof}

	\section{Mixing time lower bound in the sub-critical regime}
	\label{section:mixing-lower}
	
	In this section we prove the lower bound from Theorem \ref{thm:mainintro} for $p < p_c(q)$.
	(The lower bound for $p > p_c(q)$ is derived in Section \ref{section:super-critical}.)
	In the setting of spin systems, \cite{HS} provides a general mixing time lower bound for Glauber dynamics. As mentioned earlier, the random-cluster model is not a spin system in the usual sense because of the long range interactions, but we are still able to adapt the techniques in \cite{HS} to the random-cluster setting. In fact, our proof follows closely the argument in the proof of Theorem 4.1 in \cite{HS}, the main difference being that we require a more subtle choice of the starting configuration to limit the effect of the long range interactions.  We derive the following theorem.
	
	\begin{thm}\label{thm:lower-bound} Let $p < p_c(q)$, $q \ge 1$ and let $\eta$ be a side-homogeneous boundary condition for $\LL = (\LL_n,E_n)$. The mixing time of the Glauber dynamics in $\LL$ is $\Omega(n^2 \log n)$.
	\end{thm}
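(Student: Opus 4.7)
The plan is to adapt the lower-bound technology of Hayes and Sinclair~\cite{HS}. Their framework exhibits a Lipschitz observable $\Phi$ on configurations together with an initial state $X_0$ such that (i)~a coupon-collector obstruction keeps $\E[\Phi(X_t)]$ bounded away from $\E_{\mu^\eta}[\Phi]$ for $t=O(m\log m)$, and (ii)~both $\Phi$ under $\mu^\eta$ and $\Phi(X_t)$ concentrate on a smaller scale than that mean gap, so the gap translates into a total variation gap. The random-cluster-specific issue is that a generic initial configuration can sustain long open clusters, which would invalidate (via long-range connectivities) both the coupling analysis and the concentration step; hence the ``more subtle choice of starting configuration'' the excerpt alludes to.

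Concretely I would take $\Phi(A)=|A|$ and start from $X_0=\emptyset$, and couple $\{X_t\}$ to a stationary copy $\{Y_t\}$ with $Y_0\sim\mu^\eta$ via the monotone identity coupling, so that $X_t\subseteq Y_t$ and $Y_t\sim\mu^\eta$ for all $t$. Let $m=|E_n|=\Theta(n^2)$ and let $U_t$ be the set of edges never chosen up to time $t$. For $e\in U_t$, $X_t(e)=0$ and $Y_t(e)=Y_0(e)$, and the edge-selection sequence is independent of $Y_0$. Since $X_t(e)\le Y_t(e)$ always,
\[
\Pr[X_t(e)=1]\;=\;\mu^\eta(e=1)-\Pr[Y_t(e)=1,X_t(e)=0]\;\le\;\mu^\eta(e=1)\bigl(1-(1-1/m)^t\bigr),
\]
using $\Pr[Y_t(e)=1,X_t(e)=0]\ge\Pr[Y_0(e)=1,\,e\in U_t]=\mu^\eta(e=1)(1-1/m)^t$. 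Summing over $e$ and using the uniform lower bound $\mu^\eta(e=1)\ge p/(p+q(1-p))=:\kappa$ yields the mean gap
\[
\E_{\mu^\eta}[\Phi]-\E[\Phi(X_t)]\;\ge\;(1-1/m)^t\,\E_{\mu^\eta}[\Phi]\;\ge\;\kappa\,m^{1-c}
\]
for $t=c\,m\log m$ with any small constant $c>0$.

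Next I would establish concentration of $\Phi$ under $\mu^\eta$ at scale $O(\sqrt{m\log m})$ by bounding $\mathrm{Var}_{\mu^\eta}(\Phi)=\sum_{e,f}\mathrm{Cov}_{\mu^\eta}(A(e),A(f))=O(m)$, via FKG together with a standard random-cluster estimate controlling each covariance by a constant times $\mu^\eta(e\leftrightarrow f)$ and then~(\ref{prelim:edc-finite}) to sum. For $\Phi(X_t)$, the monotone coupling gives $\Phi(X_t)=\Phi(Y_t)-|Y_t\setminus X_t|$; the first term is concentrated by the previous step, and a parallel second-moment calculation on $|Y_t\setminus X_t|$ (whose indicator covariances are again controlled by~(\ref{prelim:edc-finite}) precisely because starting from $\emptyset$ keeps $X_t$ sparse) gives concentration of $\Phi(X_t)$ at the same scale. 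For any $c<1/2$ the mean gap $\kappa m^{1-c}$ dominates $\sqrt{m\log m}$, so the test $\{\Phi\le\E_{\mu^\eta}[\Phi]-\tfrac12\kappa m^{1-c}\}$ has probability $o(1)$ under $\mu^\eta$ and $1-o(1)$ under $\GD^t(\emptyset,\cdot)$; this yields $||\GD^t(\emptyset,\cdot)-\mu^\eta||_{\textsc{\tiny TV}}\ge\tfrac14$ and hence $\taumix=\Omega(n^2\log n)$.

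The main obstacle is controlling $\mathrm{Var}[\Phi(X_t)]$ out of equilibrium. In a product measure this would be immediate, but for the random-cluster measure correlations are mediated by arbitrarily long open clusters and the naive bound can blow up. This is exactly where the ``more subtle choice of starting configuration'' enters: because $X_0=\emptyset$ and the identity coupling is monotone, $X_t\subseteq Y_t\sim\mu^\eta$ throughout the window, so the clusters one must worry about are $\mu^\eta$-typical and hence exponentially small by~(\ref{prelim:edc-finite}). An uncontrolled initial condition (e.g.\ $X_0=E_n$ with free boundary) could produce macroscopic clusters in $X_t$ whose connectivity effects would destroy the clean variance bound and with it the concentration step.
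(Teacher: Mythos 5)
Your mean-gap computation is correct, and the idea of starting from $X_0=\emptyset$, coupling monotonically with a stationary copy $\{Y_t\}$, and using the observable $\Phi(A)=|A|$ is a promising and genuinely different route from the paper's (the paper works in continuous time with a boxed initial distribution $\pi_0$ that decouples sub-boxes, a per-box persistence estimate (Fact~\ref{fact:monotonicity}), a Chernoff bound over independent boxes, and a continuous-time disagreement-percolation lemma (Lemma~\ref{lemma:propagation:ct})). The $\mu^\eta$-variance bound $\mathrm{Var}_{\mu^\eta}(|A|)=O(m)$ is also correct in substance, though the correct justification is via the spatial-mixing estimate~(\ref{prelim:strong-mixing})/Lemma~\ref{lemma:bc-smp} rather than a freestanding ``covariance $\lesssim \mu^\eta(e\leftrightarrow f)$'' estimate; summing $|\mathrm{Cov}(\mathbf{1}_e,\mathbf{1}_f)|\le 2e^{-\lambda d(e,f)/3}$ gives $O(m)$.

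The gap is in the out-of-equilibrium concentration step. You assert that $\mathrm{Var}(|Y_t\setminus X_t|)$ is controlled by~(\ref{prelim:edc-finite}) ``because starting from $\emptyset$ keeps $X_t$ sparse,'' but the quantities you need to bound are the covariances
$\mathrm{Cov}\bigl(\mathbf{1}[Y_t(e)\ne X_t(e)],\ \mathbf{1}[Y_t(f)\ne X_t(f)]\bigr)$
of \emph{dynamical discrepancy} indicators, not connectivity events under $\mu^\eta$. These events are functions of the entire joint history (the shared clocks, the shared random numbers, and $Y_0$), and their decorrelation at distance $d(e,f)$ requires a bilinear/decoupling version of the disagreement-percolation argument; $(\ref{prelim:edc-finite})$ alone does not yield it, and ``$X_t$ is sparse'' is not the bottleneck. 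As written, this step is unjustified.

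The good news is that the gap is repairable within your framework, and the repaired argument is cleaner than what you wrote and arguably simpler than the paper's. Do not try to control $|Y_t\setminus X_t|$ directly; instead use the deterministic inclusion $Y_0\cap U_t\subseteq Y_t\setminus X_t$ that you already exploit for the mean, so that
$\Phi(X_t)\le \Phi(Y_t)-|Y_0\cap U_t|$ pointwise. Now $Y_0\sim\mu^\eta$ and the edge-selection indicators $\mathbf{1}[e\in U_t]$ are independent of $Y_0$ and pairwise negatively correlated; a short computation using $\mathrm{Cov}_{\mu^\eta}(\mathbf{1}_e,\mathbf{1}_f)\ge 0$ (FKG) and the exponential decay from Lemma~\ref{lemma:bc-smp} gives $\mathrm{Var}(|Y_0\cap U_t|)=O(m)$. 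Combined with $\mathrm{Var}(\Phi(Y_t))=\mathrm{Var}_{\mu^\eta}(\Phi)=O(m)$ and $\E[|Y_0\cap U_t|]\ge\kappa m(1-1/m)^t=\Omega(m^{1-c})$ at $t=cm\log m$, Chebyshev applied to $\Phi(Y_t)-|Y_0\cap U_t|$ and to $\Phi$ under $\mu^\eta$ at the threshold $\theta=\E_{\mu^\eta}[\Phi]-\tfrac12\kappa m^{1-c}$ gives the total-variation separation for any $c<1/2$, hence $\taumix=\Omega(m\log m)=\Omega(n^2\log n)$. This bypasses the disagreement-percolation machinery (Lemma~\ref{lemma:propagation:ct}) and the continuous-time conversion entirely; the paper's proof, by contrast, is engineered to produce genuinely \emph{independent} per-box indicators (via $\pi_0$ and the auxiliary chain that never updates $\hat E$) so that Chernoff rather than Chebyshev is available, at the cost of the extra machinery. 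Both are valid; your repaired version is the more elementary one.
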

	
	\noindent
	It is convenient to carry out our proof in continuous time.
	The continuous time Glauber dynamics is obtained by adding a rate 1 Poisson clock to each edge; when the clock at edge $e$ rings, $e$ is updated as in discrete time.
	
	The switch to continuous time requires us to extend the bound in Section \ref{section:propagation} for the speed of propagation of disagreements to the continuous time dynamics.
	In addition, we will require slightly different assumptions about the initial configuration $Y_0$.
	This is established in the following lemma, whose proof is very similar to that of Lemma \ref{lemma:propagation:dt} (only requiring minor adjustments) and is deferred to Appendix 
	A.
	
	\begin{lem} \label{lemma:propagation:ct}
		Let $p < p_c(q)$, $q \ge 1$ and let $\eta$ be a side-homogeneous boundary condition for $\LL=(\LL_n,E_n)$. Also, let $B=B(e,r)$ for some $e\in E_n$ and $r \ge 1$. Consider two copies $\{X_t\}$, $\{Y_t\}$ of the continuous time Glauber dynamics on $\LL$ such that:
		\begin{itemize}
			\setlength\itemsep{1em}
			\item[$\bullet$] $X_0(B)=Y_0(B)$;
			\item[$\bullet$] $Y_0(B)$ has law $\mu^0_{B,p,q}(\,\cdot\,|\,e=b)$ for some $b \in \{0,1\}$;
			\item[$\bullet$] $Y_0(e^\prime)=0$ for all $e^\prime \in E^c(e,r)$ incident to $\boundary B$; and
			\item[$\bullet$] $\{Y_t\}$ only performs edge updates in $B$.
		\end{itemize}
		If the evolutions of $\{X_t\}$ and $\{Y_t\}$ are coupled using the identity coupling, then there exist absolute positive constants $c$, $C$, and $\lambda$ (independent of $r$ and $n$) such that, for all $r \ge c$ and $1 \le T \le r^{1/4}/(4{\e}^2)$, we have
		\begin{equation}
		\Pr[\,X_{T}(e) \neq Y_{T}(e)\,] ~\le~ C {\e}^{-\lambda r^{1/4}}.\nonumber
		\end{equation}	
	\end{lem}
	
	\noindent
	We are now ready to prove Theorem \ref{thm:lower-bound}.
	
	\begin{proof}	Let $\{X_t\}$ and $\{X_t^{\cal D}\}$ be copies of the continuous and discrete time Glauber dynamics in $\LL$, respectively, such that $X_0^{\cal D} = X_0$. The following standard inequality  holds for all $t\ge 0$: 
		\[||X_{t^\prime}^{\cal D}-\mu^\eta||_{\textsc{\tiny TV}} \ge ||X_t-\mu^\eta||_{\textsc{\tiny TV}}-2e^{-t^\prime},\]
		where $t^\prime = |E_n| t / 2$ (see, e.g., Proposition 2.1 in  \cite{HS}).

		We will show that $||X_T-\mu^\eta||_{\textsc{\tiny TV}} > 1/3$ for some $T = \Omega(\log n)$; as a result $||X_{T^\prime}^{\cal D}-\mu^\eta||_{\textsc{\tiny TV}} > 1/4$ for some $T' = \Omega(n^2 \log n)$ and sufficiently large $n$. This implies that the mixing time of the discrete time dynamics is $\Omega(n^2 \log n)$ as desired. 
		
		First we introduce some notation. Assume w.l.o.g.\ that $4r+1$ divides $n-1$ for some $r$ to be chosen later, and split $\LL_n$ into $(n-1)^2/(4r+1)^2$ square boxes of side length $4r+1$. Each of these boxes corresponds to $B(e,2r)$ for some edge $e \in E_n$; let $\C \subseteq E_n$ be the set of these edges. Also, let $\hat{E} := E_n \setminus \bigcup_{e \in \C} \;E(e,r)$ and let ${\cal E}$ be the event that every edge $e^\prime \in \hat{E}$ incident to $B(e,r)$ for some $e \in \C$ is closed.
		
		Let $A$, $A_{\cal E}$ be random-cluster configurations sampled from  $\mu^\eta$ and $\mu^\eta(\,\cdot\,|\, {\cal E})$, respectively, and let $\beta := \E[f(A_{\cal E})]$, where $f(A_{\cal E})$ is the fraction of edges $e \in \C$ such that $A_{\cal E}(e) = b$ for some fixed $b \in \{0,1\}$. Consider the following threshold for a value of $\varepsilon > 0$ that will be chosen later:
		\begin{equation}
		\bh = \begin{cases}
		\beta+\varepsilon & \text{if $\beta < 1/2$};\\
		\beta & \text{if $\beta=1/2$};\\
		\beta-\varepsilon & \text{if $\beta > 1/2$}.
		\end{cases} \nonumber
		\end{equation}
		We pick $b$ such that $\Pr[f(A) > \bh] \le 1/2$. 
		
		As in \cite{HS}, our goal is to choose $X_0$ such that $\Pr[f(X_T) \ge \bh] \rightarrow 1$ as $n \rightarrow \infty$ for some $T=\Omega(\log n)$; then $||X_T-\mu^\eta||_{\textsc{\tiny TV}} > 1/3$ for large enough $n$, as desired. 
		
		Let $\Phi$ be the set of random-cluster configurations in $\LL$ such that $\hat{A} \in \Phi$ iff for all $e \in \C$, $\hat{A}(e)=b$ and $\hat{A}(e^\prime)=0$ for all $e'\in{\hat E}$ incident to $B(e,r)$. For each $\hat{A} \in \Phi$, let
		\begin{equation}\label{eq:lb-initial}
		\pi_0(\hat{A}) := \frac{\mu^\eta(\hat{A})}{\mu^\eta(\Phi)}.
		\end{equation}
		The starting configuration $X_0$ is sampled from $\pi_0$.
		
		Consider now an auxiliary copy $\{Y_t\}$ of the continuous time Glauber dynamics such that $Y_0 = X_0$. The two chains $\{X_t\}$,$\{Y_t\}$ are coupled using the identity coupling, except that $\{Y_t\}$ does not update edges in $\hat{E}$. First we establish a bound for $\Pr[f(Y_T) \le \bh + \varepsilon/2]$. To do this, we use the following monotonicity property which is a straightforward consequence of Lemma 3.5 in \cite{HS}.
		
		\begin{fact}\label{fact:monotonicity}
			For each $e \in \C$, let $\alpha_e := \mu^{0}_{B,p,q}(e=b)$ where $B=B(e,r)$. Then, for all $t \ge 0$, \[\Pr[Y_t(e) = b] \ge \alpha_e + (1-\alpha_e){\e}^{-t/(1-\alpha_e)}.\]
		\end{fact}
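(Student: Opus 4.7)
The plan is to prove Fact~\ref{fact:monotonicity} in three steps: first, decompose the evolution of $\{Y_t\}$ into independent per-box Glauber dynamics using the domain Markov property; second, carry out a monotone coupling argument inside a single box to obtain an exponential-in-$t$ lower bound on $\Pr[Y_t(e) = b]$; third, pass from the induced boundary condition on the box to the free reference boundary condition.

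For the decomposition step, I would observe that since $\{Y_t\}$ never updates an edge of $\hat E$, we have $Y_t(\hat E) = X_0(\hat E)$ for all $t \ge 0$, and since $X_0 \in \Phi$, every $\hat E$-edge incident to any box $B(e',r)$ with $e' \in \C$ is closed throughout. Conditioning on $Y_0(\LL_n \setminus B)$ for $B = B(e,r)$, the domain Markov property of $\mu^\eta$ implies that $\{Y_t(B)\}$ is a continuous-time heat-bath Glauber dynamics on $B$ with a fixed boundary condition $\xi$ induced by $\eta$ and the (all-closed) $\hat E$-edges adjacent to $B$; property~(P2) ensures that $\xi$ is itself side-homogeneous (this is essentially the same verification used in Lemma~\ref{lemma:propagation:dt} and Claim~\ref{claim:shbc}). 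A second application of the domain Markov property to the initial distribution $\pi_0 \propto \mu^\eta\,\1_\Phi$ shows that $Y_0(B)$, conditioned on $Y_0(\LL_n \setminus B)$, is distributed as $\mu^\xi_{B,p,q}(\,\cdot\,|\,e = b)$, so in particular $Y_0(e) = b$ almost surely.

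For the single-box analysis, let $\alpha := \mu^\xi_{B,p,q}(e = b)$ and couple $\{Y_t(B)\}$ with a stationary copy $\{\tilde Y_t(B)\} \sim \mu^\xi_{B,p,q}$ under the identity coupling. Stochastic monotonicity for $q \ge 1$ keeps the two chains comparable in the direction dictated by the sign of $b - 1/2$, and the monotonicity of the heat-bath update probability $q_b(c)$ in the surrounding configuration $c$ then gives $\rho(t) := \E[q_b(Y_t(B \setminus \{e\}))] \ge \alpha$ for all $t \ge 0$. The Kolmogorov forward equation $p'(t) = \rho(t) - p(t)$ for $p(t) := \Pr[Y_t(e) = b]$, with initial condition $p(0) = 1$, then yields $p(t) \ge \alpha + (1 - \alpha)e^{-t}$. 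To pass from $\alpha$ to $\alpha_e = \mu^0_{B,p,q}(e=b)$, I would invoke boundary-condition monotonicity $\mu^\xi_{B,p,q} \succeq \mu^0_{B,p,q}$ together with the observation that $x + (1-x)e^{-t/(1-x)}$ is increasing in $x \in [0,1]$ (its derivative is $1 - (1 + t/(1-x))e^{-t/(1-x)} \ge 0$, equivalent to $e^u \ge 1+u$); the slightly weaker rate $e^{-t/(1-\alpha_e)}$ in the fact statement then follows from $e^{-t} \ge e^{-t/(1-\alpha_e)}$.

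The main obstacle I anticipate is ensuring that the stochastic-dominance comparison in the last step is valid uniformly over both interior boxes (where $\xi$ effectively reduces to free) and boxes touching $\partial\LL$ (where $\xi$ carries wirings from $\eta$), and for both values $b \in \{0,1\}$, since the direction of dominance flips with $b$. In practice this is bypassed by noting that the downstream application in Theorem~\ref{thm:lower-bound} only requires a uniform positive lower bound on the per-box stationary probability, which follows from standard estimates for the random-cluster measure away from criticality; alternatively, the monotone coupling argument goes through verbatim with $\mu^\xi_{B,p,q}$ in place of $\mu^0_{B,p,q}$, so the statement of Fact~\ref{fact:monotonicity} is robust to the choice of reference measure.
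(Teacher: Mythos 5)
The paper offers no proof of this fact: it is stated simply as ``a straightforward consequence of Lemma~3.5 in \cite{HS}'', so one cannot literally compare routes. Your self-contained argument is sound and in fact yields the stronger rate $e^{-t}$, which dominates the claimed $e^{-t/(1-\alpha_e)}$. The domain--Markov decomposition, the monotone coupling with a stationary copy of the single-box dynamics, and the resulting differential inequality $p'(t)\ge \alpha - p(t)$ with $p(0)=1$ all go through exactly as you describe.

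The one point to tighten is the final comparison from $\alpha=\mu^\xi_{B}(e=b)$ down to $\alpha_e=\mu^0_{B}(e=b)$. You correctly flag that the direction of $\mu^\xi_B\succeq\mu^0_B$ gives $\alpha\ge\alpha_e$ only when $b=1$, and flips when $b=0$ --- so as a general statement the reduction would break. What you did not notice is that in the setting of Theorem~\ref{thm:lower-bound} this concern never arises, because of the geometry of the tiling: $\LL_n$ is partitioned into boxes $B(e,2r)$ of side $4r+1$, and the box $B=B(e,r)$ sits centered inside $B(e,2r)$ at distance $r\ge 1$ from $\boundary B(e,2r)$, hence at distance $\ge r$ from $\boundary\LL$. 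Consequently $B\cap\boundary\LL=\emptyset$ for every $e\in\C$. Since every edge of $\boundary E(B)\subset\hat{E}$ incident to $B$ is closed in $Y_0$ (by the definition of $\Phi$) and never updated by $\{Y_t\}$, and since there are no $\eta$-wirings to inherit on $\boundary B$, the induced boundary condition $\xi$ on $B$ is identically the free one. Thus $\alpha=\alpha_e$ for every $e\in\C$, the last comparison step is vacuous, and your two suggested bypasses (a uniform constant lower bound, or restating the fact with $\mu^\xi_B$) are both unnecessary --- the second because $\mu^\xi_B=\mu^0_B$ here, the first because the downstream Jensen step in Theorem~\ref{thm:lower-bound} really does use the convexity of $x\mapsto x+(1-x)e^{-T/(1-x)}$ together with $\beta=\frac{1}{|\C|}\sum_e\alpha_e$, not merely a crude positive lower bound.
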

		
		\noindent
		From this fact, we follow the steps in the proof of Theorem 4.1 in \cite{HS} to derive the following bound:
		\begin{equation}\label{lower-bound:exp}
		\E[f(Y_T)] \ge \beta + (1-\beta){\e}^{-T/(1-\beta)},
		\end{equation}
		for all $T \ge 0$. Taking $\varepsilon = 1/(4\exp(2T))$, the right hand side of (\ref{lower-bound:exp}) is at least $\bh+\varepsilon$. Also, since $Y_0$ is sampled from $\pi_0$, the configurations of the edges in $\C$ are independent in $Y_T$. Hence, $|\C|f(Y_T)$ is the sum of $|\C|$ independent Bernoulli random variables; a Chernoff bound then implies
		\begin{equation}\label{lower-bound:prob1}
		\Pr[f(Y_T) \le \bh + \varepsilon/2] \le {\e}^{-\Omega(\varepsilon^2|\C|)}.
		\end{equation}
		
		The second step in the proof is to bound $\Pr[f(X_T) \le f(Y_T)-\varepsilon/2]$. For this, we use Lemma \ref{lemma:propagation:ct}, which is tailored precisely to our setting. Thus, for all $e \in \C$ and $1 \le T \le r^{1/4}/(4{\e}^2)$, we have
		\[\Pr[X_T(e) \neq Y_T(e)] \le C {\e}^{-\lambda r^{1/4}},\]
		provided $r \ge c$, where $c$ is a sufficiently large constant. Therefore, the expected number of disagreements between $X_T$ and $Y_T$ in the set $\C$ is at most $|\C|C \exp(-\lambda r^{1/4})$, and by Markov's inequality,
		\begin{equation}\label{lower-bound:prob2}
		\Pr[f(X_T) \le f(Y_T) - \varepsilon/2] \le \frac{2C}{\varepsilon} {\e}^{-\lambda r^{1/4}}.
		\end{equation}
		Putting together the bounds in (\ref{lower-bound:prob1}) and (\ref{lower-bound:prob2}), we get
		\[\Pr[f(X_T) \le \bh] \le  \frac{2C}{\varepsilon} {\e}^{-\lambda r^{1/4}}+ {\e}^{-\Omega(\varepsilon^2|\C|)}.\]
		
		\noindent
		Finally, observe that $|\C| = \Theta(\frac{n^2}{r^2})$; thus, when $r = (\frac{1}{4} \log n)^4$ and $T = \min\{\lambda/4,1\} r^{1/4}$, we get $\Pr[f(X_T) > \bh] \rightarrow 1$ as $n \rightarrow \infty$ as desired.
	\end{proof}
	
	\section{Mixing time in the super-critical regime}
	\label{section:super-critical}
	
	In this section we prove Theorem \ref{thm:mainintro} from the Introduction for $p > p_c(q)$. We will make use of planar duality, discussed in Section \ref{section:prelim}, in order to reduce the proof to the sub-critical case.
	
	\begin{thm}\label{thm:super-critical} For $p > p_c(q)$ and $q \ge 1$, the mixing time of the Glauber dynamics on $\LL = (\LL_n,E_n)$ with free or wired boundary conditions is $\Theta(n^2 \log n)$.
	\end{thm}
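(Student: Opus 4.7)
The plan is to use planar duality to reduce the super-critical case to the sub-critical analysis of Sections~\ref{section:mixing}--\ref{section:mixing-lower}. Since the critical point coincides with the self-dual point, $p_c(q)=p_{sd}(q)$, so the hypothesis $p>p_c(q)$ is equivalent to $p^*<p_c(q)$, where $p^*=q(1-p)/(p+q(1-p))$ is the dual parameter introduced in Section~\ref{section:prelim}.

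The first step is to promote the static identity $\mu_{\LL,p,q}(A)=\mu_{\LL^*,p^*,q}(A^*)$ to a dynamical correspondence. The bijection $A\mapsto A^*$ (with $e^*\in A^*$ iff $e\notin A$) intertwines the primal Glauber dynamics on $\LL$ at parameter $p$ with the dual Glauber dynamics on $\LL^*$ at parameter $p^*$: a single step that updates $e$ in the primal is exactly a step that updates $e^*$ in the dual, with the roles of ``open'' and ``closed'' swapped, because the heat-bath acceptance probabilities coincide by duality of the measures. Consequently the two chains are isomorphic as Markov chains and share identical mixing (and coupling) times.

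Next I would identify $\LL^*$ with $\LL_{n-1}$ equipped with a side-homogeneous boundary condition. The dual vertex set consists of the $(n-1)^2$ bounded faces, which form an $\LL_{n-1}$ grid, together with a single outer-face vertex joined to every boundary vertex of $\LL_{n-1}$. Hence $\LL_n$ with the free boundary corresponds, up to the minor detail of a pair of parallel dual edges at each of the four corners, to $\LL_{n-1}$ with the wired boundary; conversely, $\LL_n$ with wired boundary maps to $\LL_{n-1}$ with the free boundary. Both of these boundaries are side-homogeneous, so applying Theorems~\ref{thm:crude-bound} and~\ref{thm:boosting} (upper bound) and Theorem~\ref{thm:lower-bound} (lower bound) to the dual chain at $p^*<p_c(q)$ yields mixing time $\Theta((n-1)^2\log(n-1))=\Theta(n^2\log n)$, which by the first step is also the mixing time of the original chain on $\LL$ at $p$.

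The only technical point that requires care is that $\LL^*$ is not literally a square lattice box: it has a single high-degree vertex representing the outer face and the two parallel dual edges at each corner. However, all of the sub-critical tools invoked above---exponential decay of connectivities, the spatial-mixing lemma for side-homogeneous boundaries (Lemma~\ref{lemma:bc-smp}), the disagreement-percolation bound (Lemma~\ref{lemma:propagation:dt}), and the recursive and boosting mixing-time arguments of Theorems~\ref{thm:crude-bound} and~\ref{thm:boosting}---depend only on local combinatorial and probabilistic estimates that are robust under such perturbations of the boundary structure, so I expect this verification to be routine rather than a source of new difficulty.
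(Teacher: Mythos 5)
Your high-level strategy — use planar duality and the fact that $p_c = p_{sd}$ to reduce $p>p_c$ to $p^*<p_c$ — is exactly the paper's. Your observation that the primal Glauber dynamics at $p$ is literally isomorphic to the Glauber dynamics on $\LL^*$ at $p^*$ (via the edge bijection $e\leftrightarrow e^*$ and the swap open/closed, cut-edge/not-cut-edge) is correct and is actually a slightly cleaner packaging of the correspondence than what the paper writes. However, the crucial step — expressing the dual chain as a Glauber dynamics on a square box with side-homogeneous boundary conditions so the sub-critical theorems apply — is where your argument has a genuine gap.

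The issue is your identification of $\LL^*$ with ``$\LL_{n-1}$ with the wired boundary.'' The dual $\LL^*$ is $\LL_{n-1}$ \emph{together with} an outer vertex $o$ of degree $\Theta(n)$, and the edges incident to $o$ are part of the configuration space of the dynamics (they are dual to the boundary edges of $\LL_n$, so the bijection you rely on requires them). Replacing $o$ and its edges by a \emph{fixed} wired boundary condition on $\LL_{n-1}$ changes the measure: under wired boundary the boundary vertices are deterministically connected, whereas in $\LL^*$ they are connected only through whichever $o$-edges happen to be open. So the chain you would be analyzing on $\LL_{n-1}$-wired is not the dual chain, and the isomorphism breaks. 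Moreover, once you keep $o$, you cannot simply declare the deviation from a square box ``routine'': $o$ has unbounded degree, so the geometric constructs in the sub-critical analysis --- the box $B(e,r)$, the exponential decay estimate (\ref{prelim:edc-finite}), the witness-path counting in Lemma~\ref{lemma:propagation:dt}, the recursive block decomposition in Theorem~\ref{thm:crude-bound} --- do not obviously extend (for instance $B(e^*,r)$ is ill-defined for an $o$-edge $e^*$, since $o$ is ``adjacent to everything''). This is not a perturbation of the boundary structure that the existing estimates absorb; it is the main technical obstacle.

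The paper sidesteps this by embedding $\LL^*$ into the genuine square box $\LL' = \LL_{n+1}$ with \emph{wired} boundary: contracting $\boundary\LL_{n+1}$ recovers $\LL^*$, and the leftover edges within $\boundary\LL_{n+1}$ are conditionally independent Bernoulli$(p^*)$ variables that factor out of the measure. The sub-critical theorems then apply to $\LL'$ directly. The price is that the dual chain $\{X_t'\}$ on $\LL'$ has more edges than $\LL$, so its projection $\{X_t\}$ to the primal is only a lazy version of the primal Glauber dynamics; the paper then shows (i) the upper bound transfers since $\|X_t-\mu\|_{\rm TV}\le\|X_t'-\mu'\|_{\rm TV}$, and (ii) for the lower bound, using the product factorization $\mu^1_{\LL'}=\mu^1_{\LL''}\otimes\mu_{\boundary\LL',p^*,1}$ and the fact that the boundary edges are stationary from time $0$, the two TV distances are actually \emph{equal} for the initial distribution $\pi_0$ used in Theorem~\ref{thm:lower-bound}. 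That factorization argument, rather than a direct isomorphism, is what makes the lower bound go through. So while your duality intuition is right, you are missing the embedding into $\LL_{n+1}$-wired and the accompanying laziness/factorization bookkeeping, which are not optional.
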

	
	\begin{proof} We focus on the free boundary condition case; the wired case follows from an analogous argument. The planar dual $\LL^* = (\LL_n^*,E_n^*)$ of the graph $\LL$ consists of an $(n-1) \times (n-1)$ box with an additional outer vertex corresponding to the infinite face of $\LL$. The dual measure $\mu_{\LL^*,p^*,q}$, with
		\[p^*= \frac{q(1-p)}{p+q(1-p)},\]
		is equivalent to the measure $\mu^1_{\LL',p^*,q}$ where $\LL' = (\LL_{n+1},E_{n+1})$ (see, e.g., \cite{BDC}). Note that $p > p_c(q)$ iff $p^* < p_c(q)$. 
		
		We say that two random-cluster configurations $A$ on $\LL$ and $A' $ on $\LL'$ are {\it compatible} if the configuration resulting from $A'$ by contracting all the vertices in the boundary of $\LL'$ into a single vertex is $A^*$, the dual configuration of $A$. Note that each $A'$ has a unique compatible $A$, while each $A$ has multiple compatible $A'$ that differ only in the disposition of the edges in the boundary $\boundary \LL_{n+1}$. 
		Observe also that any edge $e'$ of $\LL'$ with at most one endpoint incident to $\boundary \LL_{n+1}$ corresponds to a unique dual edge $e^* \in E_n^*$ and thus to a unique edge $e \in E_n$.
		
		
		In order to analyze the Glauber dynamics on $\LL$ when $p > p_c(q)$, we consider instead the Glauber dynamics on $\LL'$ with parameter $p^* < p_c(q)$, which we denote $\{X_t'\}$. We shall show that $\{X_t'\}$ induces a Markov chain $\{X_t\}$ on $\LL$ which is essentially the same as the Glauber dynamics on $\LL$ with parameter $p$, and that
		the mixing times of $\{X_t\}$ and $\{X_t'\}$ are equal up to constant factors. 
		Since $p^* < p_c(q)$, the results in Sections \ref{section:mixing} and  \ref{section:mixing-lower} imply that mixing time $\{X_t'\}$ (and hence of $\{X_t\}$) is $\Theta(n^2\log n)$. 
		
		To define the induced dynamics, let $e_t'$ be the edge chosen u.a.r. from $E_{n+1}$ at time $t$ by $\{X_t'\}$, and let $e_t$ be the corresponding edge in $\LL$ if there is one. $X_{t+1}$ is obtained from $X_t$ as follows:
		\begin{enumerate}[(i)]
			\setlength\itemsep{1em}
			\item if both endpoints of $e_t'$ are in $\boundary \LL_{n+1}$, then $X_{t+1} = X_t$;
			\item else if $X_{t+1}' = X_{t+1}' \cup \{e_t'\}$, then $X_{t+1}= X_t \setminus \{e_t\}$;
			\item else if $X_{t+1}' = X_{t+1}' \setminus \{e_t'\}$, then $X_{t+1}= X_t \cup \{e_t\}$.
		\end{enumerate}
		The initial configuration $X_0$ is the unique configuration compatible with $X_0'$. 
		
		We show first that $\{X_t\}$ is in fact a lazy version of the Glauber dynamics on $\LL$. To see this, note that 
		$X_{t+1} = X_{t}$ whenever both endpoints of $e_t'$ are in $\boundary \LL_{n+1}$. Otherwise, it is straightforward to check that		
		$e_t \in X_t$ is a cut edge iff $e_t' \in X_t'$ is not a cut edge. Hence, $X_{t+1} = X_{t} \cup \{e_t\}$ iff $X_{t+1}' = X_{t} \setminus \{e_t'\}$ and thus $X_{t+1} = X_{t} \cup \{e_t\}$ with probability:
		\[\left\{\begin{array}{ll}
		1 - \frac{p^*}{q(1-p^*)+p^*} = p & ~~~~\mbox{if $e_t'$ is a cut edge;} \\
		\\
		1 - p^* =  \frac{p}{q(1-p)+p} & ~~~~\mbox{otherwise.}
		\end{array}\right.\]
		This implies that $\{X_t\}$ does not move with probability $\Theta(n^{-1})$, and otherwise evolves exactly like the Glauber dynamics on $\LL$. Hence, it is sufficient for us to establish the mixing time of $\{X_t\}$.
		To do this, we show that the mixing times of $\{X_t\}$ and $\{X_t'\}$ are essentially the same.
		
		Let $\Omega_{\textsc{RC}}$ be the set of random-cluster configurations on $\LL$, and let ${\cal C}(A)$ be the set of configurations compatible with a configuration $A$ on $\LL$. The first observation is that when $\{X_t'\}$ mixes, so does $\{X_t\}$. This follows from:
		\begin{flalign}
		||\,X_t(\cdot) - \mu_{\LL,p,q}(\cdot)\,||_{\textsc{TV}} &= \frac{1}{2} \sum_{A \in \Omega_{\textsc{RC}}} |X_t(A)-\mu_{\LL,p,q}(A)| \nonumber\\
		&=\frac{1}{2} \sum_{A \in \Omega_{\textsc{RC}}} |X_t(A)-\mu_{\LL^*,p^*,q}(A^*)| \nonumber\\
		&\le \frac{1}{2}\sum_{A \in \Omega_{\textsc{RC}}} \sum_{A' \in {\cal C}(A)} \left|X_t'(A')-\mu^1_{\LL',p^*,q}(A')\right|
		\nonumber\\
		&= ||\,X_t'(\cdot) - \mu^1_{\LL',p^*,q}(\cdot)\,||_{\textsc{TV}}, \nonumber
		\end{flalign}
		where in the first and last equality we use the definition of total variation distance, in the second equality we use planar duality, and the third inequality follows from the triangle inequality and the correspondence between the configurations of $\LL$ and $\LL'$. Hence, by the results in Section \ref{section:mixing}, the mixing time of $\{X_t\}$ is $O(n^2\log n)$. 
		
		We show next that the mixing time of $\{X_t\}$ is $\Omega(n^2 \log n)$. For this, note that in Theorem \ref{thm:lower-bound} we showed that there is an initial distribution $\pi_0$ for $X_0'$, defined in (\ref{eq:lb-initial}), such that 
		\[||X_T'(\cdot)-\mu^1_{\LL',p^*,q}(\cdot)||_{\textsc{TV}} > 1/4\]
		for some $T=\Omega(n^2\log n)$. We will prove that when $X_0'$ is sampled from $\pi_0$, then 
		\begin{equation}\label{equation:lb-observation2-1}
		||X_t'(\cdot)-\mu^1_{\LL',p^*,q}(\cdot)||_{\textsc{TV}} = ||X_t(\cdot)-\mu_{\LL,p,q}(\cdot)||_{\textsc{TV}} \nonumber
		\end{equation}
		for all $t \ge 0$. To show this we introduce some additional notation. 
		
		Let $\LL'' := (\LL_{n+1},E_{n+1}\setminus \boundary E_{n+1})$, where $\boundary E_{n+1} \subseteq E_{n+1}$ is the set of edges with both endpoints in $\boundary \LL_{n+1}$. Also, for any random-cluster configuration $A'$ on $\LL'$, we use $\boundary A'$ to denote the random-cluster configuration induced in $\boundary\LL'$ by $A'$.
		
		Under the wired boundary condition, we have that for any $e \in \boundary \LL'$, $\mu^1_{\LL',p^*,q} (e = 1) = p^*$. Hence, $\mu^1_{\LL',p^*,q}$ is the product measure of the distributions $\mu^1_{\LL'',p^*,q}$ and $\mu_{\boundary \LL',p^*,1}$; the latter is the distribution on $\boundary \LL'$ where every edge is sampled independently with probability $p^*$. Thus we have
		\[\mu^1_{\LL',p^*,q}(A') =  \mu^1_{\LL'',p^*,q}(A'\setminus\boundary A')\cdot\mu_{\boundary \LL',p^*,1}(\boundary A').\]
		By the correspondence between the configurations of $\LL^*$ and $\LL'$,  we have that $\mu^1_{\LL'',p^*,q}(A'\setminus\boundary A') = \mu_{\LL^*,p^*,q}(A^*)$. (As in Section~\ref{section:prelim}, $A^*$ denotes the dual of the unique configuration $A$ compatible with $A'$.) Moreover, by planar duality $\mu_{\LL^*,p^*,q}(A^*) = \mu_{\LL,p,q}(A)$, and thus 
		\begin{equation}
		\mu^1_{\LL',p^*,q}(A') =  \mu_{\LL,p,q}(A) \cdot \mu_{\boundary \LL',p^*,1}(\boundary A').\label{eq:lower-bound-1}
		\end{equation}
		
		Also, under the wired boundary condition the configuration on the boundary of $X_0'$ is sampled according to $\mu_{\boundary \LL',p^*,1}$. Hence, the distribution on the boundary of $X_t'$ has law $\mu_{\boundary \LL',p^*,1}$ for all $t \ge 0$. Thus, 
		\begin{equation}
		X_t'(A') = \mu_{\boundary \LL',p^*,1}(\boundary A') \cdot X_t'(A'\setminus\boundary A') 
		= \mu_{\boundary \LL',p^*,1}(\boundary A') \cdot X_t(A).\label{eq:lower-bound-2}
		\end{equation}
		Hence,
		\begin{flalign}
		||\,X_t'(\cdot) - \mu^1_{\LL',p^*,q}(\cdot)\,||_{\textsc{TV}} 
		&= \frac{1}{2}\sum_{A \in \Omega_{\textsc{RC}}} \sum_{A' \in {\cal C}(A)} \left|X_t'(A')-\mu^1_{\LL',p^*,q}(A')\right|
		\nonumber\\
		&= \frac{1}{2}\sum_{A \in \Omega_{\textsc{RC}}} \sum_{A' \in {\cal C}(A)} \mu_{\boundary \LL',p^*,1}(\boundary A')\left|X_t(A)-\mu_{\LL,p,q}(A)\right|\nonumber\\
		&=  ||\,X_t(\cdot)-\mu_{\LL,p,q}(\cdot)\,||_{\textsc{TV}}, \label{equation:lb-observation2}\nonumber
		\end{flalign}
		where in the first and last equality we used the definition of total variation distance and the second follows from (\ref{eq:lower-bound-1}) and (\ref{eq:lower-bound-2}).
		
		The results in Section \ref{section:mixing-lower} then imply that the mixing time of $\{X_t\}$ is $\Omega(n^2\log n)$. Consequently, the Glauber dynamics on $\LL$ with $p > p_c(q)$ mixes in $\Theta(n^2\log n)$ steps, as desired. \end{proof}
	
	\section*{Acknowledgments}
	
	The authors would like to thank Fabio Martinelli and Allan Sly for helpful suggestions. We would like also to thank the anonymous referees whose valuable comments improved the exposition of the results.
	

		
	\section*{A Proof of Lemma \ref{lemma:propagation:ct}}\label{A.lemma:propagation:ct}
	
	We show first that the measure that results from conditioning on the state of a single edge maintains the exponential decay of finite volume connectivities (\ref{prelim:edc-finite}).
	
	\begin{fact}\label{fact:exp-decay-extension}
		Let $p < p_c(q)$, $q \ge 1$, and let $\eta$ be a boundary condition for $\LL = (\LL_n,E_n)$.  Consider a copy $\{Y_t\}$ of the continuous time Glauber dynamics on $\LL$, and assume $Y_0$ is sampled from the distribution $\mu^\eta_{\LL,p,q}(\,\cdot\,|\,e=b)$, for some $e \in E_n$ and $b \in \{0,1\}$. Then, for all $u,v \in \LL_n$, there exists positive constant $C$ and $\Ll$ such that
		\[\Pr[u \tightoversetsub{\scriptscriptstyle Y_t}{\leftrightarrow} v] \le C {\rm e}^{-\Ll d(u,v)},\]
		where $u \tightoversetsub{\scriptscriptstyle Y_t}{\leftrightarrow} v$ denotes the event that $u$ and $v$ are connected by a path of open edges in $Y_t$.
	\end{fact}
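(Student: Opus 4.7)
The plan is to show that the law $\nu_t$ of $Y_t$ is dominated (as a measure) by a constant multiple of the stationary measure $\mu^\eta := \mu^\eta_{\LL,p,q}$, uniformly in $t$, and then invoke the exponential decay of finite-volume connectivities (\ref{prelim:edc-finite}) for $\mu^\eta$ itself.

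First I would bound below the stationary edge marginal $\mu^\eta(e=b)$ using the heat-bath transition probabilities from Section~\ref{section:prelim}. Conditional on $A \setminus \{e\}$, the probability that $e$ is open equals either $p$ or $p/(p+q(1-p))$ depending on whether $e$ is a cut edge, and for fixed $p \in (0,1)$ and $q \ge 1$ both of these probabilities, as well as their complements, are bounded below by a constant $c = c(p,q) > 0$. Taking expectations gives $\mu^\eta(e=b) \ge c$, uniformly in $n$, $\eta$, $e$, and $b$. Hence the initial law $\nu_0 = \mu^\eta(\,\cdot\,|\,e=b)$ has Radon--Nikodym derivative $\mathbf{1}[A(e)=b]/\mu^\eta(e=b) \le 1/c$ with respect to $\mu^\eta$, so $\nu_0(\mathcal{A}) \le (1/c)\,\mu^\eta(\mathcal{A})$ for every event $\mathcal{A}$.

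Next, since $\mu^\eta$ is stationary for the continuous-time Glauber semigroup $\{P_t\}_{t \ge 0}$, the pointwise domination $\nu_0 \le (1/c)\,\mu^\eta$ is preserved under evolution: for any event $\mathcal{A}$,
\[
\nu_t(\mathcal{A}) \;=\; \int P_t(A,\mathcal{A})\,\nu_0(dA) \;\le\; \frac{1}{c}\int P_t(A,\mathcal{A})\,\mu^\eta(dA) \;=\; \frac{1}{c}\,\mu^\eta(\mathcal{A}).
\]
Applying this to $\mathcal{A} = \{u \leftrightarrow v\}$ and invoking (\ref{prelim:edc-finite}) then yields $\Pr[u \tightoversetsub{\scriptscriptstyle Y_t}{\leftrightarrow} v] \le (C/c)\,{\e}^{-\Ll d(u,v)}$, which is the desired bound.

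The argument reduces to a short density-comparison once the edge marginal is controlled, so I do not anticipate a serious obstacle. The only delicate point is the uniform lower bound on $\mu^\eta(e=b)$, which must hold independently of $n$ and the boundary condition $\eta$; this follows directly from the heat-bath formula because $p$ and $p/(p+q(1-p))$ depend only on the model parameters and lie in $(0,1)$ for all $p \in (0,1)$ and $q \ge 1$.
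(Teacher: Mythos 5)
Your proof is correct, and it takes a genuinely different route from the paper's. The paper constructs a monotone coupling of $\{Y_t\}$ with an auxiliary chain $\{Z_t\}$ that starts from $\mu^\eta(\,\cdot\,|\,e=1)$ and never updates $e$; since this conditional measure is invariant for the restricted dynamics, $Z_t$ retains law $\mu^\eta(\,\cdot\,|\,e=1)$ for all $t$, and monotonicity gives $Y_t \subseteq Z_t$, so connectivity probabilities under $Y_t$ are bounded by those under $\mu^\eta(\,\cdot\,|\,e=1)$, which are then controlled using the decay bound together with a lower bound on $\mu^\eta(e=1)$. You instead observe that $\nu_0 = \mu^\eta(\,\cdot\,|\,e=b)$ has Radon--Nikodym derivative at most $1/\mu^\eta(e=b)$ with respect to $\mu^\eta$, that $\mu^\eta(e=b)$ is bounded below by a constant $c(p,q)>0$ via the heat-bath conditionals, and that the pointwise domination $\nu_0 \le (1/c)\mu^\eta$ is preserved by the semigroup because $\mu^\eta$ is stationary and $P_t(A,\cdot)$ is a nonnegative kernel. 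Both arguments are sound. Yours is arguably cleaner: it avoids monotone couplings and FKG entirely, treats $b=0$ and $b=1$ symmetrically (the paper must route the $b=0$ case through $b=1$ via stochastic domination), and sidesteps the somewhat awkward final paragraph in the paper's proof about the case of small $p$. The paper's coupling approach has the minor advantage of being phrased in the stochastic-domination language used throughout, but for this particular Fact the density-comparison argument is the more direct one.
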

	
	\begin{proof} Let $\{Z_t\}$ be a second instance of the continuous time Glauber dynamics. The evolution of $\{Z_t\}$ is coupled with that of $\{Y_t\}$ via the identity coupling, except that $\{Z_t\}$ never updates the edge $e$. The initial configuration of $\{Z_t\}$ is sampled according to the distribution $\mu^{\eta}(\,\cdot\,|\,e=1)$ such that $Y_0 \subseteq Z_0$. This is always possible because $\mu^{\eta}(\,\cdot\,|\,e=1) \succeq \mu^{\eta}$. Then, $Y_t \subseteq Z_t$ and $Z_t$ has law $\mu^{\eta}(\,\cdot\,|\,e=1)$ for all $t \ge 0$. We establish that the measure $\mu^{\eta}(\,\cdot\,|\,e=1)$ has exponential decay of finite volume connectivities and thus so does the distribution of $Y_t$ for all $t \ge 0$. By (\ref{prelim:edc-finite}), for all $u,v \in \LL_n$, we have
		\[\mu^{\eta}(\,u \;{\leftrightarrow}\; v\,|\,e=1)\mu^\eta(e=1) \le \mu^\eta(u\;{\leftrightarrow}\; v) \le C {\rm e}^{-\Ll d(u,v)},\]
		where $C,\Ll$ are positive constants. If $p' = \frac{p}{q(1-p)+p}$, then $\mu^{\eta} \succeq \mu^{\eta}_{\LL,p',1}$ (see, e.g., \cite{Fort}), and thus $\mu^\eta(e=1) \ge p'$. Since $q \ge 1$,
		\[\mu^{\eta}(\,u\;{\leftrightarrow}\; v\,|\,e=1) \le \frac{qC}{p} {\rm e}^{-\Ll d(u,v)}.\]
		The result then follows immediately when $p = \Omega(1)$. Otherwise, the measure $\mu^{\eta}$ is stochastically dominated by any random-cluster measure $\mu^{\eta}_{\LL,p'',q}$ with $p''= \Omega(1)$, for which we just established exponential decay of finite volume connectivities; the result follows by monotonicity. \end{proof}
	
	\noindent
	We are now ready to prove the lemma.  
	
	\begin{proof} Let $Q_t$ be the random time at which the $t$-th edge is updated by the identity coupling. For some fixed $\ell$ to be chosen later, and each $t \ge 0$, consider the event
		\[{\cal E}_{\ell,t} := \{u \overset{\scriptscriptstyle{Y_{Q_t}}}{\nleftrightarrow} v~~~\forall u,v \in B~\text{s.t.}~d(u,v)>\ell \},\]
		where $u\overset{\scriptscriptstyle{Y_{Q_t}}}{\nleftrightarrow}v$ denotes the event that $u$ and $v$ are not connected by a path in $Y_{Q_t}(B)$.
		Also, let ${\cal E}_\ell := \bigcap_{t:Q_t \le T} {\cal E}_{\ell,t}$. Then,
		\begin{align}\label{propagation:main-ineq-ct}
		\Pr[\,X_{T}(e) \neq Y_{T}(e)\,] 
		\le \Pr[\,X_{T}(e) \neq Y_{T}(e) \,|\, {\cal E}_\ell\,] + \Pr[\,\neg {\cal E}_\ell\,]
		\end{align}
		(cf. equation (\ref{propagation:main-ineq})). We bound each term on the right hand side of (\ref{propagation:main-ineq-ct}) separately.
		
		Conditioned on the event ${\cal E}_\ell$, a witness for the fact that $X_{T}(e) \neq Y_{T}(e)$ can be constructed as in discrete time. However, the probability that a given witness of length $L$ is updated by the continuous time dynamics is instead bounded using the following fact from \cite{HS}.
		
		\begin{fact}\label{fact:poisson-clocks} Consider $L$ independent rate $1$ Poisson clocks. Then, the probability that there is an increasing sequence of times $0 \le t_1 < ... < t_L \le T$ such that clock $i$ rings at time $t_i$ is at most $\left(\frac{eT}{L}\right)^L$.
		\end{fact}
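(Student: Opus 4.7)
The plan is to reduce the event in question to a standard Gamma-distribution tail bound by exploiting the memorylessness of the Poisson process. First I would construct a greedy sequence of times: set $S_0 := 0$ and, inductively, let $S_i$ be the first ring time of clock $i$ strictly after $S_{i-1}$ (and $+\infty$ if clock $i$ never rings after $S_{i-1}$). A brief induction shows that the event in question equals $\{S_L \le T\}$: if a valid increasing sequence $0 \le t_1 < \cdots < t_L \le T$ exists, then $S_1 \le t_1$, whence $S_2 \le t_2$, and so on up to $S_L \le t_L \le T$; conversely, $(S_1,\ldots,S_L)$ is itself a valid sequence whenever $S_L \le T$.

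Next, by the strong Markov property of clock $i$'s Poisson process applied at time $S_{i-1}$, the increment $S_i - S_{i-1}$ is $\mathrm{Exp}(1)$ and independent of $(S_1 - S_0, \ldots, S_{i-1} - S_{i-2})$ (the earlier increments depend only on clocks $1,\ldots,i-1$, which are independent of clock $i$). Hence $S_L$ is a sum of $L$ i.i.d.\ $\mathrm{Exp}(1)$ variables, i.e., $S_L \sim \mathrm{Gamma}(L,1)$.

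Finally I would bound the Gamma CDF directly:
\[
\Pr[S_L \le T] \;=\; \int_0^T \frac{x^{L-1} e^{-x}}{(L-1)!}\, dx \;\le\; \int_0^T \frac{x^{L-1}}{(L-1)!}\, dx \;=\; \frac{T^L}{L!},
\]
and then apply the standard inequality $L! \ge (L/e)^L$ to conclude $\Pr[S_L \le T] \le (eT/L)^L$, as claimed. The proof is essentially elementary and I do not anticipate any real obstacle; the only mild subtlety is justifying the greedy reduction in the first step, which is what lets us avoid summing over the combinatorial choices of \emph{which} ring of each clock to use and reduces the problem to a single one-dimensional random variable.
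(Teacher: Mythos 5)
Your proof is correct. Note that the paper does not actually prove this fact: it is quoted verbatim from \cite{HS}, so there is no in-paper argument to compare against, and a self-contained proof like yours is welcome. Your greedy reduction is sound: $S_{i-1}$ is by construction a function of clocks $1,\dots,i-1$ only, hence independent of clock $i$, which is exactly what makes the increment $S_i-S_{i-1}$ an $\mathrm{Exp}(1)$ variable independent of the earlier increments (this independence, rather than the strong Markov property for clock $i$'s own filtration, is what does the work, as your parenthetical already indicates); the identification of the event with $\{S_L\le T\}$ holds up to the null event $t_1=0$. For comparison, the more common route (and essentially the one in \cite{HS}) is a first-moment bound: the expected number of increasing $L$-tuples $0\le t_1<\cdots<t_L\le T$ with clock $i$ ringing at $t_i$ equals the volume of the ordered simplex,
\[
\int_{0\le t_1<\cdots<t_L\le T} dt_1\cdots dt_L \;=\; \frac{T^L}{L!},
\]
and Markov's inequality together with $L!\ge (L/e)^L$ gives the claim. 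Both arguments are elementary and pass through the same sharper intermediate bound $T^L/L!$; yours has the mild aesthetic advantage of identifying the event exactly with a single $\mathrm{Gamma}(L,1)$ tail rather than over-counting witnesses, while the first-moment version avoids the (routine) bookkeeping of the greedy construction.
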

		
		\noindent
		Recall from Section \ref{section:propagation} that the number of witnesses of length $L$ is at most $(4(\ell+1)^2)^L$  (crudely). Hence, following the same steps as in the proof Lemma \ref{lemma:propagation:dt}, and taking $\ell = r^{1/4}-1$, we get
		\begin{equation}\label{propagation:first-term-ct}
		\Pr[\,X_{T}(e) \neq Y_{T}(e) \,|\, {\cal E}_\ell\,] \le \frac{\e}{\e-1} \cdot {\e}^{-r^{3/4}},
		\end{equation}
		using the fact that $T \le r^{1/4}/(4{\e}^2)$ (cf. equation (\ref{propagation:first-term})).
		
		To bound the second term on the right hand side of (\ref{propagation:main-ineq-ct}), let $N$ be the number of edge updates in $B$ up to time $T$. Observe that $N$ is a Poisson random variable with rate $M := |E(B,r)| = \Theta(r^2)$. Using standard bounds for Poisson tail probabilities we get that $\Pr[N > {\e}^2 M T] = \exp(-\Omega(M T))$ for all $T \ge 1$. Therefore, 
		\[\Pr[\,\neg{\cal E}_\ell\,] \le \Pr[\,\neg{\cal E}_\ell\,|\,N \le {\e}^2 M T\,]+{\e}^{-\Omega(M T)}.\]
		
		Also, $\neg {\cal E}_\ell := \bigcup_{t:Q_t \le T} \,\neg {\cal E}_{\ell,t}$, and if the edge update at time $Q_t$ occurs outside $B$, we have $\neg{\cal E}_{\ell,t} = \neg{\cal E}_{\ell,t+1}$. Hence, a union bound implies
		\[\Pr[\,\neg{\cal E}_\ell\,] ~\le~ {\e}^2 M T \max_{t:Q_t \le T}  \Pr[\,\neg{\cal E}_{\ell,t}\,]+{\e}^{-\Omega(M T)}.\]
		
		Fact \ref{fact:exp-decay-extension} establishes exponential decay of finite volume connectivities (\ref{prelim:edc-finite}) for the distribution of $Y_t$ in $B$ for all $t \ge 0$. Then, as in Lemma \ref{lemma:propagation:dt}, we obtain
		\[\Pr[\,\neg{\cal E}_\ell\,] \le O(r^{6.25}) \cdot {\e}^{-\Omega(r^{1/4})} + {\e}^{-\Omega(r^2)}.\]
		Together with (\ref{propagation:first-term-ct}), this implies there exist constants $c,C,\Ll > 0$ such that for all $r \ge c$ we have $\Pr[\,X_{T}(e) \neq Y_{T}(e) \,] \le C \exp(-\lambda r^{1/4})$, as desired.  \end{proof}	
	
\end{document}